\newcommand{\INITIALLY}{\REQUIRE{}}
\newcommand{\ROUND}{\ENSURE{}}
\newtheoremstyle{newthm}
  {\topsep}   
  {\topsep}   
  {\itshape}  
  {0pt}       
  {\scshape} 
  {.}         
  {5pt}  
  {}          
\theoremstyle{newthm}
\newtheorem{thm}{Theorem}
\newtheorem{prop}[thm]{Proposition}
\newtheorem{lem}[thm]{Lemma}
\DeclareMathOperator{\dec}{dec}
\newcommand{\IR}{\mathbb{R}}
\newcommand{\IN}{\mathbb{N}}
\renewcommand{\leq}{\leqslant}
\renewcommand{\ge}{\geqslant}
\renewcommand{\geq}{\geqslant}
\DeclareMathOperator\In{In}
\DeclareMathOperator\Out{Out}
\title{Amortized Averaging Algorithms for Approximate Consensus}
\author{Bernadette Charron-Bost\textsuperscript{1} \and Matthias
F\"ugger\textsuperscript{2} \and Thomas Nowak\textsuperscript{3}}
\date{\textsuperscript{1} CNRS, \'Ecole polytechnique\\
\textsuperscript{2} Max-Planck-Institut f\"ur Informatik\\
\textsuperscript{3} Universit\'e Paris-Sud}
\begin{document}
\maketitle

\begin{abstract}

We introduce a new class of distributed algorithms for the approximate consensus problem in  dynamic rooted networks, 
	which we call {\em amortized averaging algorithms}.
They are deduced from ordinary averaging algorithms by adding a value-gathering phase 
	before each value update.
This  allows their decision time to drop from being exponential 
	in the number $n$ of processes to being linear under the assumption that each process knows $n$.
In particular, the {\em amortized midpoint algorithm}, which achieves a linear decision time, works in completely 
	anonymous dynamic rooted networks where processes can exchange and store continuous values, and 
	under the assumption that  the number of processes is known to all processes.
We then study the way amortized averaging algorithms degrade when  communication graphs are from time to time non rooted,
	or with a wrong estimate of the number of processes.
Finally, we analyze the amortized midpoint algorithm under the additional constraint that 
	processes can only store and send quantized values, and get as a corollary that the 2-set consensus problem
	is solvable in linear time in any rooted dynamic network model when allowing all decision values to be in the range of initial values.

\end{abstract}

\setcounter{footnote}{3}

\section{Introduction}

This paper studies the problem of {\em approximate consensus}, i.e.,
	the task for a set of  processes, each of them with an initial real scalar value, 
	to decide on values that lie in an $\varepsilon$-neighborhood of each other 
	and in the range of the initial values.
It has many applications, e.g., for geometric coordination tasks or clock synchronization, 
	which often have to be solved in mobile ad-hoc networks under quite adverse constraints.

In recent work~\cite{CBFN15}, we proved that approximate consensus is solvable in a dynamic network model if and only if all occurring communication graphs are rooted, i.e., contain a rooted spanning tree.
This spanning tree, as well as the root 
	can change completely from one communication graph to the next.
There is hence no stability condition necessary for solving approximate consensus.
In fact, we showed that it suffices to restrict our attention to the simple class of {\em averaging algorithms\/} 
	to solve approximate consensus whenever it is indeed solvable.
In these algorithms, processes have a single scalar state variable, which they repeatedly update to a weighted average of the values they received from their neighbors.
We proved that, if~$\varrho$ is a lower bound on the weights used in the averaging steps of an averaging algorithm, then it achieves $\varepsilon$-agreement in $O\!\left(n\varrho^{-n} \log\frac{1}{\varepsilon}\right)$ rounds where~$n$ is the number of processes.
Moreover, example graphs exist showing that averaging algorithms can really need an exponential time until they achieve $\varepsilon$-agreement.

We now list the main contributions of the present paper:
\begin{enumerate}
\item We propose a new analysis technique of averaging algorithms based on the
	notion of {\em $\varrho$-safeness}, introduced in~\cite{Mor05,Cha11}, which is a generalization of 
	the lower bound condition on the parameter~$\varrho$ discussed above. 
This technique focuses on the essential properties needed for contracting the range of 
	current values in the system by directly focusing on the set of values and not 
	on the weights used in the averaging steps, as done classically.
Together with a graph-theoretic reduction that already played a key role in~\cite{CBFN15}, it enables short proofs of the
	convergence and of upper bounds on the contraction rate of (existing and new) averaging algorithms.

\item We introduce the notion of {\em amortization\/} of averaging algorithms,
	which consists in inserting a value-gathering phase before each averaging step.
This additional phase, surprisingly, transforms averaging algorithms into ``turbo versions'' of 
	themselves and takes their decision time from being exponential~\cite{CBFN15} 
	to being polynomial in the number of processes.
\item We combine these two ideas in the design of the {\em amortized midpoint algorithm}, which we prove to have an optimal contraction rate and whose decision time is linear in the number of processes, which is also optimal.
More specifically, it achieves $\varepsilon$-agreement in $O\!\left(n\log\frac{1}{\varepsilon}\right)$ rounds where~$n$ is the number of processes.
This algorithm neither relies on any stability of the communication topology nor on any way of distinguishing two processes.
It thus works in completely dynamic anonymous networks.
\end{enumerate}

The well-functioning of the amortized midpoint algorithm does, however, rely on two fundamental hypotheses:
	knowing the number of processes and the rootedness of all occurring communication graphs.
However we demonstrate that, even with an erroneous estimate on the number of processes
	or when communication graphs sometimes fail to be rooted, the
amortized midpoint algorithm still achieves $\varepsilon$-agreement, albeit at
a later time.
In fact, this graceful degradation property holds for all amortized averaging
algorithms, which shows that a certain part of the often observed robustness of
averaging algorithms actually carries over to their amortized versions.

The  linear convergence time of the amortized midpoint algorithm, combined with
 	its great versatility and robustness,   is  especially  striking, notably when 
	comparing it to classical averaging algorithms such as the {\em equal-neighbor} 
	algorithm.
It is well known  that the latter averaging algorithms may give an exponential decision time in the number of processes.

One of the first steps to go from exponential to polynomial decision time was done by Olshevsky and Tsitsiklis~\cite{OT11} 
	who presented a cubic-time averaging algorithm with time-varying bidirectional topologies.
This result was later improved to quadratic time~\cite{NOOT09} when the update rules at every time correspond to
	a doubly stochastic matrix.
Very recently, Olshevsky presented a linear-time algorithm in time-constant bidirectional communication graphs~\cite{Ols15}.
This result was preceded by other attempts at lowering the convergence time in several other special cases (e.g., \cite{DCZ09,HJOV14,BHOT05,YSSBG13}).
The sum of these efforts makes the time-linearity of the relatively simple amortized midpoint algorithm in arbitrarily dynamic directed anonymous networks all the more so striking.

The robustness against changes in the hypotheses goes even further:
If processes do not dispose of real-valued variables, but only variables of finite precision, 
	then the amortized midpoint algorithm achieves $\varepsilon$-consensus in linear time 
	as long as it is not obviously impossible, i.e., whenever~$\varepsilon$ is larger than the variables' precision.
If~$\varepsilon$ is smaller than the precision, then $\varepsilon$-agreement cannot be achieved 
	in all rooted network models because of the impossibility of exact consensus~\cite{SW89}.
In fact, this property of achieving $\varepsilon$-agreement whenever it is possible is specific 
	to the amortized midpoint algorithm and does not hold for general amortized averaging algorithms 
	and even less so for non-amortized averaging algorithms.
	
By choosing~$\varepsilon$ equal to the precision, we get as a corollary that the 
	{\em $2$-set consensus problem} is solvable in linear time in all rooted dynamic network models.
This new result on $k$-set consensus holds when all decision values are only constrainted to be in the range 
	of initial values (instead of being equal to initial values), which sheds new light on the role of the validity condition in the
	specification of this classical problem in distributed computing.

While the practical implications of having a fast approximate consensus algorithm are immediately visible in man-made communication networks, our results go even beyond that and are interesting also from a different, more analytic, perspective.
Many natural phenomena, like bird flocking~\cite{Cha09}, synchronization of coupled oscillators~\cite{Str00}, opinion dynamics~\cite{HK02}, or firefly synchronization~\cite{MS90}, can be modeled and explained via agents that execute averaging algorithms.
However, there is an important mismatch between the fast convergence times observed in nature and the theoretical worst-case times, which often are exponential or worse~\cite{Cha09}.
In this sense, our results provide a step towards closing this gap by suggesting that the agents in these natural systems in fact operate on a different, slower, time scale than their environments, and that this inertia actually contributes positively to faster convergence.

The rest of the paper is organized as follows:
Section~\ref{sec:model} gives the problem and model definition, as well as introducing the concept of averaging algorithms.
Section~\ref{sec:nonsplit} introduces the analysis technique we use and gives tight upper bounds on the contracting rate of averaging algorithms.
The concept of amortization of averaging algorithms is given in Section~\ref{sec:amortized}.
Several amortized versions of averaging algorithms are introduced and studied, in particular the amortized midpoint algorithm.
Resiliency and robustness of these algorithms is studied in Section~\ref{sec:resiliency}.
Finally, Section~\ref{sec:rounding} extends the analysis to variables of finite precision.

\section{Approximate consensus and averaging algorithms}\label{sec:model}

We assume a distributed, round-based computational model in the spirit
     of the Heard-Of model by Charron-Bost and Schiper~\cite{CS09}.
A system consists in a set  of processes $[n] = \{1,\dots,n\}$.
Computation proceeds in {\em rounds}:
In a round, each process sends its state to its outgoing neighbors,
	 receives values from its incoming neighbors, and
     finally updates its state based.
The  value of the updated state is determined by a deterministic
     algorithm, i.e., a transition function that maps the values in the
     incoming messages  to a  new state value.
Rounds are communication closed in the sense that no process receives
     values in round~$k$ that are sent in a round different from~$k$.

Communications that occur in a round are modeled by a directed graph~$G=([n], E(G))$ 
	with  a self-loop at each node.
The latter requirement is quite natural as a process can obviously communicate with 
	itself instantaneously.
Such a directed graph is called a  {\em communication graph}.
We denote by~$\In_p(G)$ the set
     of incoming neighbors of~$p$ and by $\Out_p(G)$ the set of
     outgoing neighbors of~$p$ in~$G$.
     
The {\em product\/} of two communication graphs $G$ and $H$, denoted $G \circ H$, is 
	the directed graph  with an edge from $(p,q)$ 
	 if there exists~$r\in [n]$ such that $(p,r)\in E(G)$ and $ (r,q) \in E(H)$.
We easily see that $G  \circ H$ is a communication graph and, because of the self-loops, 
	$E(G) \cup E(H) \subseteq E(G  \circ H) $.

A {\em communication pattern\/} is a sequence~$(G(k))_{k\ge 1}$ of
     communication graphs.
For a given communication pattern,
	$\In_p(k)$ and  $\Out_p(k)$ stand for  $\In_p(G(k))$ and $\Out_p(G(k))$, respectively.

Each process~$p$ has a {\em local state\/} $s_p$ the values of which at
     the end of round~$k \ge 1$ is denoted by~$s_p(k)$.
Process~$p$'s initial state, i.e., its state at the beginning of round~$1$,
     is denoted by~$s_p(0)$.
Let the {\em global state} at the end of round~$k$ be the collection
     $s(k) = (s_p(k))_{p \in [n]}$.
The {\em execution\/} of an algorithm from global initial state~$s(0)$,
     with communication pattern~$(G(k))_{k\ge 1}$ is the unique
      sequence $(s(k))_{k\ge 0}$ of global states defined as follows:  for  each round~$k \ge 1$, 
      process~$p$ sends~$s_p(k-1)$ to all the processes in~$\Out_p(k)$,
     receives $s_q(k-1)$ from each process~$q$ in~$\In_p(k)$, and
     computes~$s_p(k)$ from the incoming messages, according to the
     algorithm's transition function.
     
When  the structure of states allows each process to record and to relay information it has received during
	any period of $K$ rounds for some positive integer~$K$,   we may be led to  modify time-scale and to
	consider blocks of $K$  consecutive rounds, called {\em macro-rounds}:
	macro-round $\ell$ is the sequence of rounds $(\ell-1) K+1, \ldots, 	\ell K$ and the corresponding
	information  flow graph, called {\em communication graph at macro-round} $\ell$,  is 
	the product of the communication graphs $ G((\ell-1)K +1)  \circ \ldots \circ G( \ell K ) $.

\subsection{Consensus and approximate consensus}
 
A crucial problem in distributed systems is to achieve agreement among local process states
     from arbitrary initial local states.
It is a well-known fact that this goal is not easily achievable in the context 
	of dynamic network changes~\cite{SW89,CBFN15}, and restrictions on communication patterns
	are required for that.
We define a {\em network model\/} as a non-empty set ${\cal N}$ of
     communication graphs, those that may occur in communication patterns.

We now consider the above round-based algorithms in which the local state of 
	process~$p$ contains two variables~$x_p$ and~$\dec_p$. 
Initially the range of $x_p$ is $[0,1]$
 	and $\dec_p = \bot$ (which informally means that $p$
	has not  decided yet).\footnote{%
	In the case of  {\em binary consensus\/}, $x_p$  is restricted to be initially from $\{0,1\}$.}
Process~$p$ is allowed to set~$\dec_p$ to the current value of $x_p$, and so to a value~$v$ different from $\bot$,
	     only once; in that case we say that~$p$ {\em decides}~$v$.

For any  $\varepsilon \geq 0$, an algorithm {\em  achieves $\varepsilon$-agreement 
	with the communication pattern}~$(G(k))_{k\ge 1}$  if each execution from a global initial state as 
	specified above and with the communication pattern~$(G(k))_{k\ge 1}$   fulfills the following three conditions:               
	\begin{description}
	  \item{\em $\varepsilon$-Agreement.} The decision values of any two processes are within~$\varepsilon $.

	  \item{\em Validity.}  All decided values are in the range of the initial values of processes.

	  \item{\em Termination.} All processes eventually decide.

	\end{description}
An algorithm {\em  solves approximate consensus\/} in a network
	     model~${\cal N}$ if for any $\varepsilon >0$, it achieves $\varepsilon$-agreement with each communication
	     pattern formed with graphs all in~${\cal N}\!.$

In a previous paper~\cite{CBFN15}, we proved the following characterization of network models in which 
	approximate consensus is solvable:

\begin{thm}[\cite{CBFN15}]\label{thm:CBFN15}
The approximate consensus problem is solvable in a network model ${\cal N}$ 
	if and only if each graph in ${\cal N}$ has a rooted spanning tree.
\end{thm}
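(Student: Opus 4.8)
The theorem is an equivalence, and I would prove the two implications separately, the necessity being short and the sufficiency carrying the real weight.

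For necessity I would argue by contraposition: assume some $G\in{\cal N}$ is not rooted and exhibit an execution that defeats every algorithm. Since $G$ has a self-loop at each node, its strongly connected components partition $[n]$ and form an acyclic condensation; ``$G$ not rooted'' means this condensation has at least two source components. Pick two of them, $A$ and $B$: they are disjoint, nonempty, and \emph{closed}, i.e.\ $\In_p(G)\subseteq A$ for every $p\in A$ and likewise for $B$, because a source component has no incoming edge from outside. Now take the constant communication pattern $G(k)=G$ for all $k\ge 1$, initialize $x_p(0)=0$ for $p\in A$, $x_q(0)=1$ for $q\in B$, and arbitrary values in $[0,1]$ elsewhere. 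By closedness every process in $A$ receives, in every round, only messages originating in $A$, so a trivial induction gives $x_p(k)=0$ for all $p\in A$ and all $k$, and symmetrically $x_q(k)=1$ for all $q\in B$. Termination forces every process to decide, and since $\dec_p$ can only be set to the current value of $x_p$, the processes of $A$ decide $0$ while those of $B$ decide $1$; this breaks $\varepsilon$-agreement as soon as $\varepsilon<1$. Hence no algorithm solves approximate consensus in ${\cal N}$.

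For sufficiency I would show that a bare averaging algorithm already works — say the equal-neighbor rule $x_p(k)=\frac{1}{|\In_p(k)|}\sum_{q\in\In_p(k)}x_q(k-1)$, or any rule whose averaging weights are bounded below by a constant $\varrho>0$ (for equal-neighbor, $\varrho=1/n$) — with every process deciding on the current value of $x_p$ at a fixed round $R$ depending only on $n$ and $\varepsilon$. Validity holds because each $x_p(k)$ stays in $\hull\{x_q(0):q\in[n]\}$, and Termination holds because $R$ is a known parameter, so everything reduces to $\varepsilon$-agreement. Put $M(k)=\max_p x_p(k)$ and $m(k)=\min_p x_p(k)$; because every update is a convex combination that, thanks to the self-loop, gives positive weight to the process's own previous value, $M$ is nonincreasing and $m$ nondecreasing, so it suffices to make the spread $M(k)-m(k)$ contract by a constant factor over windows of bounded length. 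The crux is a purely combinatorial lemma: the product $G(k+1)\circ\cdots\circ G(k+N)$ of $N=n-1$ consecutive rooted communication graphs is \emph{non-split} — any two nodes have a common incoming neighbor in that product. Granting it, fix such a window; for any $p,p'$ there is a node $s$ from which a time-respecting walk of length exactly $N$ reaches both $p$ and $p'$, so unrolling the recursion ($x(k+N)$ is a product of $N$ row-stochastic matrices applied to $x(k)$) shows that $x_p(k+N)$ and $x_{p'}(k+N)$ are each convex combinations of the $x_r(k)$ in which the common value $x_s(k)\in[m(k),M(k)]$ carries weight at least $\varrho^{N}$. Both therefore lie in a common interval of length $(1-\varrho^{N})\bigl(M(k)-m(k)\bigr)$, whence $M(k+N)-m(k+N)\le(1-\varrho^{N})\bigl(M(k)-m(k)\bigr)$; iterating over $\lceil \log(1/\varepsilon)/\log(1/(1-\varrho^{N})) \rceil$ windows drives the spread below $\varepsilon$ and fixes $R$.

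The main obstacle is the non-split lemma. To prove it I would follow the running product $P_j=G(k+1)\circ\cdots\circ G(k+j)$ and track the descendant sets $\{q:(s,q)\in E(P_j)\}$, which only grow with $j$ because of the self-loops, arguing that the potential ``size of the largest descendant set'' strictly increases at each round until it reaches $n$: if a largest descendant set were a proper subset left invariant by the next graph's out-neighborhoods, it would be a proper set that is out-closed in that graph and misses its root, and one has to invoke rootedness to contradict this or to produce a strictly larger descendant set one round later. Making this counting argument precise — and pinning down the exact window length for which non-splitness is guaranteed — is the delicate point; everything else (monotonicity of $M$ and $m$, the convex-combination contraction, the choice of $R$, the graph-theoretic details of the necessity direction) is routine, and the combinatorial step can alternatively be imported from the graph-theoretic reduction of \cite{CBFN15} mentioned in the introduction.
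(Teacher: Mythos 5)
Your overall architecture --- impossibility from two closed source components of a non-rooted graph, and solvability from a lower-bounded averaging rule combined with the fact that a product of $n-1$ rooted self-looped graphs is nonsplit, giving a $(1-\varrho^{n-1})$-contraction per window of $n-1$ rounds --- is exactly the route of~\cite{CBFN15}, which the present paper only cites: the combinatorial lemma you defer is Proposition~\ref{prop:productrooted}, and your windowed contraction is the argument of Theorem~\ref{thm:Knonsplit}. The sufficiency half is sound as written (modulo the deferred lemma, which you are entitled to import, though your ``largest descendant set grows'' sketch still needs the case analysis on whether that set is out-closed in the next graph and where that graph's root lies).

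The genuine gap is in the necessity half, at the step ``a trivial induction gives $x_p(k)=0$ for all $p\in A$.'' That induction is valid only for averaging-type algorithms, whereas the impossibility must defeat \emph{every} algorithm: an arbitrary transition function may move $x_p$ away from $0$ even when $p$ hears only zeros (nothing in the model forbids, say, $x_p\gets 1/2$, since Validity constrains only decided values and only relative to the \emph{global} initial range, which here is $[0,1]$). The step has to be replaced by the standard indistinguishability argument: because $A$ is closed under incoming edges, the local views of $A$'s processes in your execution coincide with their views in the execution where \emph{all} processes start at $0$; in that execution Validity forces every decision to be $0$, and determinism transfers those decisions to your execution. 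Symmetrically the processes of $B$ decide $1$, and the contradiction with $\varepsilon$-agreement for $\varepsilon<1$ goes through. With that repair the proof is complete.
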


\subsection{Averaging algorithms }

We focus on  {\em averaging algorithms} in which at each round~$k$, 
	every process~$p$ updates~$x_p$ to some weighted average
	of the values it has just received: 
	formally, if  $m_p(k-1)$  is the minimum of the values $\{ x_q(k-1) \mid q\in \In_p(k)\}$ 
	received by~$p$ in round~$k$ and $M_p(k-1)$ is its maximum, then
	$$  m_p(k-1) \leq x_p(k) \leq   M_p(k-1)  \, .$$
In other words,  at each round~$k$, every process adopts a new value within the interval
 	formed by the values of its incoming neighbors in the communication graph~$G(k)$.

Since we strive for distributed implementations of averaging
     algorithms, weights in  the average update rule of  process~$p$ are required 
     to be locally computable by~$p$.
They may depend only on the set of values received by~$p$ at round~$k$, as is the
	case, for instance, with the update rule of the {\em mean-value algorithm\/}:
	\begin{equation}\label{eq:MV}
	  x_{p}(k) = \frac{1}{|V_p(k)|} \sum_{v \in V_p(k)}  v    \, 
	  \end{equation}
	where $ V_p(k) = \{  x_q(k-1) \mid q\in \In_p(k) \}$.  
In contrast, even in anonymous networks,  weights in the update rule for~$p$  may depend on the 
	{\em multiset} of values received by~$p$ at round~$k$ counted with their multiplicities:
	as an example, $p$'s update rule in the
	 {\em equal-neighbor algorithm\/}  is given by:
	\begin{equation}\label{eq:EN}
	x_{p}(k) = \frac{1}{|\In_p(k)|} \sum_{q\in \In_p(k)}  x_q(k-1) \, .
	\end{equation}
     
Observe that the decision rule is not specified in the above definition of averaging algorithms:
	the decision time immediately follows from the number of rounds that is proven 
	to be sufficient to reach $\varepsilon$-agreement.

Let $\varrho\in ]0,1/2]$;
  	an averaging algorithm is {\em $\varrho $-safe in}   ${\cal N}$  if at any round~$k$ of each of its executions
	with communication patterns in  ${\cal N}$,  each process adopts a new value within the interval formed
	by its neighbors in~$G(k)$  not too close to the boundary: 
	$$ \varrho M_p(k-1) + (1-\varrho) m_p(k-1) \leq x_p(k) \leq  (1-\varrho) M_p(k-1) + \varrho m_p(k-1) \, .$$
Clearly if an averaging algorithm is $\varrho $-safe, then it is $\varrho ' $-safe for any 
	$\varrho ' \leq \varrho$.
We also easily check the following property of the equal-neighbor  and mean-value algorithms. 
 
\begin{prop}\label{prop:EN+MV}
In any network model with $n$ processes, the equal-neighbor algorithm and the mean-value algorithm 
	are both $(1/n)$-safe.
\end{prop}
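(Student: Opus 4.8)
The plan is to observe that in both algorithms the updated value $x_p(k)$ is a convex combination of the values that $p$ received in round~$k$ in which every summand carries a coefficient of at least $1/n$; once this is in hand, $\varrho$-safeness with $\varrho=1/n$ falls out of a one-line monotonicity argument. The proof will therefore split into an easy structural observation about the two update rules, followed by the two inequalities defining $\varrho$-safeness.

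First I would fix a round $k$ and a process $p$, and abbreviate $I=\In_p(k)$. For the equal-neighbor algorithm, the update rule~\eqref{eq:EN} writes $x_p(k)=\sum_{q\in I} w_q\, x_q(k-1)$ with $w_q=1/|I|$ for every $q\in I$; because of the self-loop, $1\le |I|\le n$, so each $w_q\ge 1/n$. For the mean-value algorithm, \eqref{eq:MV} writes $x_p(k)=\sum_{v\in V_p(k)} \tfrac{1}{|V_p(k)|}\, v$, and since $V_p(k)$ is obtained from the received values by discarding repetitions we still have $1\le |V_p(k)|\le |I|\le n$, so each distinct received value is weighted by at least $1/n$. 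In both cases the coefficients are nonnegative and sum to $1$, and — this is the point to state carefully for the mean-value algorithm — the values $m_p(k-1)$ and $M_p(k-1)$, being respectively the smallest and the largest value received by $p$, do occur among the summands.

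Next I would establish the lower inequality of $\varrho$-safeness. In the convex combination above, pick a summand equal to $M_p(k-1)$ — one exists by the previous paragraph — and call its coefficient $w$; then $w\ge 1/n$. Bounding every other summand below by $m_p(k-1)$ gives
\[
x_p(k) \;\ge\; w\, M_p(k-1) + (1-w)\, m_p(k-1) .
\]
Since $M_p(k-1)\ge m_p(k-1)$, the right-hand side is nondecreasing in $w$, hence at least $\tfrac1n M_p(k-1) + \bigl(1-\tfrac1n\bigr) m_p(k-1)$, which is exactly the lower bound required for $\varrho=1/n$. The upper inequality follows symmetrically, this time bounding every summand other than one equal to $m_p(k-1)$ from above by $M_p(k-1)$. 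As $p$ and $k$ were arbitrary and nothing about the communication pattern entered the argument, both algorithms are $(1/n)$-safe in every network model with $n$ processes.

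I do not expect a genuine obstacle here; the computation is routine. The only spot meriting a moment's attention is the mean-value algorithm, where one must note that collapsing the received values into the set $V_p(k)$ (i) only decreases the cardinality, so it stays $\le n$, and (ii) still retains both extremal received values, so that the very same convexity-plus-monotonicity argument applies verbatim.
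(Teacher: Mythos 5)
Your proof is correct and is exactly the routine convexity-plus-monotonicity check that the paper leaves implicit (it states the proposition with only the remark that one ``easily checks'' it). The two points you flag — that the weights are at least $1/n$ because $1\le|V_p(k)|\le|\In_p(k)|\le n$, and that the extremal received values do appear among the summands — are precisely what makes the check go through.
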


Finally, an averaging algorithm is $\alpha$-{\em contracting in } ${\cal N}$ if
	at each round~$k$ of each of its executions with communication patterns in  ${\cal N}$,  
	we have
	$$\delta \big( x(k) \big) \leq \alpha \, \delta \big( x(k-1) \big)  $$
	where  $\delta$ is the seminorm on $\IR^n$ defined by
	$\delta(x)= \max_{p} (x_p) - \min_{p} (x_p) $. 

\begin{prop}\label{prop:contract}
In any network model, every $\alpha$-contracting averaging algorithm with $\alpha \in [0,1[$ 
	solves approximate consensus and  achieves $\varepsilon$-agreement in  
	$\big\lceil \log_{\frac{1}{\alpha}}  \big( \frac{1}{\varepsilon} \big) \big\rceil$ rounds.
\end{prop}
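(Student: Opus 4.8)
The plan is to show that the seminorm $\delta$ decays geometrically along every execution and then to read off the three conditions defining $\varepsilon$-agreement. First I would prove by induction on $k$ that $\delta\big(x(k)\big) \le \alpha^k \, \delta\big(x(0)\big)$ for every $k \ge 0$: the base case $k=0$ is an identity, and the inductive step is precisely the defining inequality of $\alpha$-contraction, $\delta(x(k)) \le \alpha\,\delta(x(k-1))$, applied in the given network model. Since the initial values all lie in $[0,1]$, we have $\delta\big(x(0)\big) \le 1$, and therefore $\delta\big(x(k)\big) \le \alpha^k$ for all $k \ge 0$.

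Next I would determine how many rounds make this bound drop below $\varepsilon$. Assume first $0 < \alpha < 1$. Then $\alpha^k \le \varepsilon$ is equivalent, after taking logarithms (the inequality reversing because $\log\alpha < 0$), to $k \ge \log_{1/\alpha}(1/\varepsilon)$; hence with $k^\star = \big\lceil \log_{1/\alpha}(1/\varepsilon) \big\rceil$ rounds we obtain $\delta\big(x(k^\star)\big) \le \alpha^{k^\star} \le \varepsilon$. (If $\varepsilon \ge 1$ the bound is already satisfied at $k=0$, consistently with $k^\star \le 0$; and the degenerate case $\alpha = 0$ is trivial since then $\delta(x(1)) = 0$.) Note that $k^\star$ depends only on $\alpha$ and $\varepsilon$, so all processes can decide synchronously at the end of round $k^\star$ by setting $\dec_p := x_p(k^\star)$.

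It then remains to verify the three conditions for this decision rule, for an arbitrary communication pattern over the model and arbitrary $\varepsilon > 0$. \emph{Termination} is immediate since every process decides at round $k^\star$. For \emph{$\varepsilon$-Agreement}, for any two processes $p,q$ we have $|x_p(k^\star) - x_q(k^\star)| \le \max_r x_r(k^\star) - \min_r x_r(k^\star) = \delta\big(x(k^\star)\big) \le \varepsilon$. For \emph{Validity} I would establish, by a second induction on $k$, that the range of current values is nonincreasing: for each $p$ the averaging rule gives $m_p(k-1) \le x_p(k) \le M_p(k-1)$, and since $\In_p(k) \subseteq [n]$ we have $\min_r x_r(k-1) \le m_p(k-1)$ and $M_p(k-1) \le \max_r x_r(k-1)$; taking the minimum, resp.\ maximum, over $p$ yields $\min_r x_r(k-1) \le \min_r x_r(k)$ and $\max_r x_r(k) \le \max_r x_r(k-1)$. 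Hence every decided value $x_p(k^\star)$ lies in $[\min_r x_r(0), \max_r x_r(0)]$, i.e.\ in the range of the initial values. As nothing about the model was used beyond $\alpha$-contraction, the algorithm solves approximate consensus. There is no genuine obstacle in this argument; the only points needing minor care are the reversal of the inequality when dividing by $\log\alpha < 0$ (absorbed into writing $\log_{1/\alpha}$) and the boundary cases $\varepsilon \ge 1$ and $\alpha = 0$.
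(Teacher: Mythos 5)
Your proposal is correct and follows essentially the same route as the paper's proof: geometric decay of the seminorm $\delta$ by induction on the contraction inequality, monotonicity of the running minimum and maximum for Validity, and plugging in $k^\star = \big\lceil \log_{1/\alpha}(1/\varepsilon)\big\rceil$ for the round count. The only difference is that you spell out the degenerate cases $\alpha = 0$ and $\varepsilon \ge 1$, which the paper leaves implicit.
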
	

\begin{proof}
Let $m(k)$ and $M(k)$ denote the minimum value and the maximum value in the network 
	at round~$k$, respectively; hence $\delta \big(x(k) \big)=M(k)-m(k) $.
	
An easy inductive argument shows that the sequences $\big( m(k) \big)_{k \geq 1}$ and 
	$\big( M(k) \big)_{k \geq 1}$ are non-decreasing and non-increasing, respectively.
Because the averaging algorithm is $\alpha$-contracting, we have 
	$$\delta \big( x(k) \big) \leq  \delta \big(x(0)\big) \alpha^k \, . $$
 Since $\alpha <1$, the two sequences  $\big( m(k) \big)_{k \geq 0}$ and  $\big( M(k) \big)_{k \geq 0}$ converge to the same  value.
The convergence proof of each sequence $\big( x_p(k) \big)_{k \geq 1}$ to a common value in 
	the range of the initial values rests on the inequalities 
	$$ m(0) \leq m(k) \leq x_p(k) \leq M(k) \leq M(0) \, .$$
We have $\delta \big(x(k)\big) \leq \alpha^{k}$ 
	for every integer $k\geq 0$.
Plugging in $K = \big\lceil \log_{\frac{1}{\alpha}}  \big( \frac{1}{\varepsilon} \big) \big\rceil $ gives
	$\delta \big(x(K)\big) \leq  \varepsilon$.
\end{proof}

\section{Averaging algorithms, nonsplitness, and contraction rates}\label{sec:nonsplit} 

In a previous paper~\cite{CBFN15}, we proved solvability of approximate consensus in 
	rooted network models by a reduction to {\em nonsplit} network models:
	a directed graph is {\em nonsplit\/}  if any two nodes have a common 
	incoming neighbor. 
Indeed we showed the following general proposition:

\begin{prop}[{\cite{CBFN15}}]\label{prop:productrooted}
Every product of $n-1$ rooted graphs  with $n$ nodes and self-loops at all nodes is nonsplit.
\end{prop}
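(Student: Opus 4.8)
The plan is to show that after composing $n-1$ rooted graphs on the same vertex set $[n]$, every pair of nodes has a common incoming neighbor in the product. The key quantity to track is, for a node $q$, the set $R_t(q)$ of nodes $p$ such that there is a path $p = p_0 \to p_1 \to \dots \to p_t = q$ where the $i$-th edge lies in $G_i$ (the $i$-th factor); equivalently, $R_t(q)$ is the in-neighborhood of $q$ in the partial product $G_1 \circ \dots \circ G_t$. Because each $G_i$ has a self-loop at every node, $R_t(q)$ is non-decreasing in $t$ (any path of length $t$ extends to one of length $t+1$ by prepending a self-loop), and it always contains $q$ itself. The statement to prove is exactly that $R_{n-1}(q) \cap R_{n-1}(q') \neq \emptyset$ for all $q, q'$.

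First I would establish the core lemma: for each node $q$, $|R_t(q)| \geq \min(t+1, n)$, so that $R_{n-1}(q) = [n]$; this immediately gives nonsplitness since then every node is a common incoming neighbor of every pair. To prove the lemma I argue by induction on $t$. Suppose $|R_t(q)| = r < n$. I claim $|R_{t+1}(q)| > r$. Consider the graph $G_{t+1}$ restricted to the cut between $R_t(q)$ and its complement: since $G_{t+1}$ is rooted, it has a rooted spanning tree, and the root of that tree can reach every node; in particular, following tree-edges from the root, there must be an edge of $G_{t+1}$ entering the set $R_t(q)$ from outside — otherwise $R_t(q)$ would be a non-empty proper subset of $[n]$ with no incoming edges from its complement, so no node outside $R_t(q)$ could reach the tree root if the root is inside, and if the root is outside it could not reach any node inside, contradicting rootedness in either case. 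Pick such an edge $(p, s)$ with $p \notin R_t(q)$ and $s \in R_t(q)$. Then the path witnessing $s \in R_t(q)$, extended on the front by the edge $(p,s) \in G_{t+1}$, shows $p \in R_{t+1}(q)$; combined with $R_t(q) \subseteq R_{t+1}(q)$ this yields $|R_{t+1}(q)| \geq r+1$. The base case $t = 0$ is $|R_0(q)| = |\{q\}| = 1$.

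The main obstacle is getting the cut argument airtight: one must be careful that ``rooted'' is used correctly — a rooted graph has \emph{some} node that reaches all others, and I need to extract from that a crossing edge into an arbitrary non-empty proper subset $S = R_t(q)$. The clean way is: let $w$ be a root of $G_{t+1}$. If $w \in S$, take any node $u \notin S$; since $w$ reaches $u$, the first edge along a $w$-to-$u$ walk that leaves $S$ is a crossing edge \emph{out} of $S$ — but I need one \emph{into} $S$, so instead I should run the argument on the reverse: actually the simplest fix is to observe that every node can reach $w$ is \emph{not} guaranteed, only that $w$ reaches everything; so take $u \in S$ and a walk from $w$ to $u$; if $w \notin S$ this walk crosses into $S$ at some edge, done; if $w \in S$, then take instead any $u' \notin S$ (exists since $S \neq [n]$), and note we actually want incoming edges to $S$, which forces a small restructuring — the correct observation is that $S$ cannot be \emph{closed under outgoing edges away from itself} while $G_{t+1}$ is rooted unless $S = [n]$, because then the complement would be unreachable from any root lying in $S$, and if every root lies outside $S$ then $S$ itself, being reachable, must have an incoming crossing edge. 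Handling this dichotomy cleanly, rather than the counting, is where the real care is needed; once it is in place, the induction and the conclusion $R_{n-1}(q) = [n]$ follow immediately, and nonsplitness of the $(n-1)$-fold product is then trivial.
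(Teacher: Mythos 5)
Your core lemma --- that $\lvert R_{n-1}(q)\rvert = n$ for every $q$, i.e., that the product of $n-1$ rooted self-looped graphs is the \emph{complete} graph --- is false, and the counterexample also pinpoints where the induction breaks. Take every factor $G_i$ to be the star centered at a node $w$: edges $(w,u)$ for all $u$, plus all self-loops. This graph is rooted (with root $w$) and is idempotent under the product, so the $(n-1)$-fold product is the star itself; there $\In_q = \{q,w\}$ has only two elements for $q \neq w$. The product is nonetheless nonsplit because $w$ is a common incoming neighbor of every pair, which is exactly the conclusion you need --- but it shows that no argument forcing each individual in-neighborhood to grow to all of $[n]$ can succeed. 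Concretely, your claimed crossing edge into $S = R_t(q)$ need not exist: rootedness only says that \emph{some} node reaches everybody, and if that root lies inside $S$ and has no incoming edges from the complement (as $w$ does in the star), then no edge enters $S$ from outside. Your last paragraph correctly backs off to the dichotomy ``either an edge enters $S$ or an edge leaves $S$,'' but an edge leaving $S$ does not enlarge $R_t(q)$, so the single-set induction stalls. (There is also an indexing slip: a crossing edge of $G_{t+1}$ prepended to a path whose $i$-th edge lies in $G_i$ does not produce a path of the form required for $G_1\circ\dots\circ G_{t+1}$; to prepend edges you must define $R_t(q)$ as the in-neighborhood of $q$ in the product of the \emph{last} $t$ factors. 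That part is fixable, unlike the main issue.)

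The repair is to track the in-neighborhoods of \emph{both} target nodes simultaneously. Fix $q \neq q'$ and let $A_t$ and $B_t$ be the in-neighborhoods of $q$ and $q'$ in the product of the last $t$ factors, with $A_0=\{q\}$ and $B_0=\{q'\}$; both are nondecreasing in $t$ thanks to the self-loops. As long as $A_t \cap B_t = \emptyset$, a root $w$ of the next factor lies outside at least one of the two sets, say outside $A_t$; a walk from $w$ into $A_t$ then yields an edge $(p,s)$ of that factor with $s\in A_t$ and $p\notin A_t$, so $\lvert A_{t+1}\rvert \geq \lvert A_t\rvert +1$ while $\lvert B_{t+1}\rvert\geq\lvert B_t\rvert$. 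Hence $\lvert A_t\rvert+\lvert B_t\rvert\geq t+2$ for as long as the sets stay disjoint, which is impossible at $t=n-1$ since two disjoint subsets of $[n]$ have sizes summing to at most $n$. Therefore $A_{n-1}\cap B_{n-1}\neq\emptyset$, which is precisely nonsplitness. Your bookkeeping and your dichotomy are both ingredients of this argument; what is missing is the idea of playing the two sets against each other so that the root of each new factor, wherever it sits, is forced to feed at least one of them.
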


\noindent The main point of  nonsplitness then lies in the following	 result:
\begin{thm}\label{thm:nonsplit}
In a nonsplit network model, a $\varrho$-safe averaging algorithm is $(1-\varrho)$-contracting.
Thus it solves approximate consensus and achieves $\varepsilon$-agreement in  
	$\big\lceil \log_{\frac{1}{1-\varrho}}  \big( \frac{1}{\varepsilon} \big) \big\rceil$ rounds.
\end{thm}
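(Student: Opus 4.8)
The plan is to prove the contraction inequality $\delta\big(x(k)\big) \le (1-\varrho)\,\delta\big(x(k-1)\big)$ directly for an arbitrary round~$k$, i.e.\ that the algorithm is $(1-\varrho)$-contracting, and then to invoke Proposition~\ref{prop:contract} with $\alpha = 1-\varrho$. Since $\varrho \in {]0,1/2]}$ we have $\alpha \in [1/2,1[\, \subseteq [0,1[$, so Proposition~\ref{prop:contract} immediately yields both solvability of approximate consensus and the round bound $\big\lceil \log_{\frac{1}{1-\varrho}}\big(\frac{1}{\varepsilon}\big)\big\rceil$; nothing about convergence or the explicit count needs to be reproved here.

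First I would fix~$k$, set $m = \min_q x_q(k-1)$ and $M = \max_q x_q(k-1)$, so that $\delta\big(x(k-1)\big) = M-m$, and observe that for every process~$p$ the received values form a sub-multiset of all current values, with $\In_p(k)\neq\emptyset$ because of the self-loops; hence $m \le m_p(k-1) \le M_p(k-1) \le M$. Next I would pick two arbitrary processes $p$ and~$p'$ and use that $G(k)$ is nonsplit: there is a common incoming neighbor $r \in \In_p(k)\cap\In_{p'}(k)$, so its value satisfies $m_p(k-1) \le x_r(k-1) \le M_p(k-1)$ and likewise $m_{p'}(k-1) \le x_r(k-1) \le M_{p'}(k-1)$.

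The key computation is then short. By $\varrho$-safeness,
\[
 x_p(k) \le (1-\varrho)M_p(k-1) + \varrho\, m_p(k-1) \le (1-\varrho)M + \varrho\, x_r(k-1),
\]
using $m_p(k-1)\le x_r(k-1)$, and symmetrically
\[
 x_{p'}(k) \ge \varrho\, M_{p'}(k-1) + (1-\varrho)m_{p'}(k-1) \ge \varrho\, x_r(k-1) + (1-\varrho)m,
\]
using $x_r(k-1)\le M_{p'}(k-1)$. Subtracting, the terms $\varrho\, x_r(k-1)$ cancel and we get $x_p(k) - x_{p'}(k) \le (1-\varrho)(M-m)$. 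Exchanging $p$ and~$p'$ gives the same bound on $x_{p'}(k)-x_p(k)$, so $|x_p(k)-x_{p'}(k)| \le (1-\varrho)\,\delta\big(x(k-1)\big)$, and taking the maximum over all pairs yields $\delta\big(x(k)\big) \le (1-\varrho)\,\delta\big(x(k-1)\big)$.

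I expect the only genuinely subtle point — the heart of the argument — to be the cancellation of the common-neighbor term $\varrho\, x_r(k-1)$: it is precisely the $\varrho$-margin away from the interval boundary guaranteed by $\varrho$-safeness that gets absorbed by a value lying in the intervals of \emph{both} endpoints, which is exactly why nonsplitness is the right hypothesis. Everything else (the bounds $m\le m_p(k-1)$, $M_p(k-1)\le M$; non-emptiness of $\In_p(k)$; the symmetry step) is routine, and the tail of the statement follows verbatim from Proposition~\ref{prop:contract}.
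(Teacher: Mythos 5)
Your proof is correct and follows essentially the same route as the paper's: both exploit the common incoming neighbor guaranteed by nonsplitness, apply the $\varrho$-safeness bounds so that the $\varrho\,x_r(k-1)$ (resp.\ $\varrho v$) terms cancel when subtracting, and then conclude via Proposition~\ref{prop:contract}. The only cosmetic difference is that you bound $M_p(k-1)\le M$ and $m_{p'}(k-1)\ge m$ explicitly where the paper writes $\max_{p,q}\big(M_p(k-1)-m_q(k-1)\big)=\delta\big(x(k-1)\big)$.
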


\begin{proof}
Since the communication graph~$G(k)$ is nonsplit, any two processes $p$ and $q$ receive 
	at least one common value~$v$.
It follows that:
	$$ \varrho  v + (1-\varrho) m_p(k-1) \leq x_p(k) \leq  (1-\varrho) M_p(k-1) + \varrho v $$
and
$$	 \varrho  v + (1-\varrho) m_q(k-1) \leq x_q(k) \leq  (1-\varrho) M_q(k-1) + \varrho v  \, .$$
Hence $$	\big \lvert x_p(k) - x_q(k)  \big \rvert \leq (1-\varrho)\cdot \max_{p,q} \big( M_p(k-1)- m_q(k-1) \big) = (1-\varrho)\cdot\delta\big(x(k-1)\big) \, . $$
This shows that the algorithm is $ (1-\varrho) $-contracting, and the rest of the theorem follows from 
	Proposition~\ref{prop:contract}.
\end{proof}

Combined with Proposition~\ref{prop:EN+MV}, we obtain an elementary proof of the classical result~\cite{CMA08a,Cha13}
	that approximate consensus is solvable in rooted network models, 
	which does not make use anymore of Dobrushin's coefficients of scrambling matrices.

As an immediate consequence of Theorem~\ref{thm:nonsplit}, we deduce that any $\varrho$-safe algorithm 
	is at best $(1/2)$-contracting since by definition the safety coefficient~$\varrho$ cannot be larger than~$1/2$.
Indeed, in Section~\ref{sec:amortized} we shall present an averaging algorithm that is  $(1/2)$-safe in any 
	nonsplit network model.
But do there exist other averaging algorithms with a contraction rate less than $1/2$?
In the rest of the section, we show that the answer is no in the network model of nonsplit communication graphs,
	 except in the case of two processes for which
	we prove that the best contraction rate is $1/3$.

\begin{thm}\label{thm:contractrate}
In the  network model of nonsplit communication graphs with $n$ processes, there is no 
	averaging algorithm
        that is  $\alpha$-contracting  
	for any $\alpha > 1/3$ if $n=2$ and for any $\alpha > 1/2$ if $n \geq 3$.
\end{thm}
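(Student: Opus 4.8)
The plan is to prove the matching lower bound: for every averaging algorithm and every $\alpha<1/3$ when $n=2$ (resp.\ $\alpha<1/2$ when $n\ge 3$) there are an initial configuration and a nonsplit communication graph for which $\delta\big(x(1)\big)>\alpha\,\delta\big(x(0)\big)$. Since $\alpha$-contraction is demanded at every round, it suffices to work at round~$1$, and I fix once and for all $x_1(0)=1$, $x_2(0)=0$, $x_p(0)=0$ for $3\le p\le n$, so that $\delta\big(x(0)\big)=1$, varying only the communication graph. The pivotal observation is that $x_p(1)$ is a fixed function of the incoming neighbourhood $\In_p(1)$ together with the values received along it; hence two communication graphs that present to process~$p$ the same incoming neighbours carrying the same values force the same~$x_p(1)$. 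I will use this to reduce the algorithm's entire freedom to two scalars $a,b\in[0,1]$.

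First I would fix three nonsplit graphs, each with a designated node that is a common incoming neighbour of everybody. In $G_1$ this node is~$1$, and it is a source: $\In_1(G_1)=\{1\}$ and $\In_p(G_1)=\{1,p\}$ for $p\ge 2$. In $G_2$ it is node~$2$, again a source: $\In_2(G_2)=\{2\}$ and $\In_p(G_2)=\{2,p\}$ for $p\ne 2$. In $G_3$ it is once more node~$2$, but no longer a source: $\In_1(G_3)=\In_2(G_3)=\{1,2\}$ and $\In_p(G_3)=\{2,p\}$ for $p\ge 3$. Nonsplitness of each graph is immediate from the self-loops and the designated common incoming neighbour. Let $a:=x_2(1)$ be what the algorithm outputs for ``process~$2$, incoming neighbours $\{1,2\}$, receiving the value~$1$ from node~$1$ and the value~$0$ from node~$2$'', which happens verbatim in both $G_1$ and $G_3$; and let $b:=x_1(1)$ be what it outputs for ``process~$1$, incoming neighbours $\{1,2\}$, receiving~$0$ from node~$2$ and~$1$ from node~$1$'', which happens verbatim in both $G_2$ and $G_3$.

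Then I would read off the one-round diameter in each graph. In $G_1$, node~$1$ stays at~$1$ and process~$2$ moves to~$a$, so $\delta\big(x(1)\big)\ge 1-a$. In $G_2$, every node other than process~$1$ receives only the value~$0$ and is therefore frozen at~$0$, while process~$1$ sits at~$b$, so $\delta\big(x(1)\big)=b$. In $G_3$, processes $3,\dots,n$ are again frozen at~$0$ (their incoming neighbours, themselves and node~$2$, all carry~$0$), process~$1$ is at~$b$ and process~$2$ at~$a$, so $\delta\big(x(1)\big)=\max(a,b)$ when $n\ge 3$ and $\delta\big(x(1)\big)=|a-b|$ when $n=2$. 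Hence, across the three graphs, the adversary can always force $\delta\big(x(1)\big)\ge\max\big(1-a,\,b,\,\max(a,b)\big)\ge\max(1-a,a)\ge 1/2$ when $n\ge 3$, and $\delta\big(x(1)\big)\ge\max\big(1-a,\,b,\,|a-b|\big)\ge 1/3$ when $n=2$; the last inequality is elementary, since if $1-a$, $b$ and $|a-b|$ were all $<1/3$ then $a>2/3$ and $b<1/3$, which forces $|a-b|>1/3$ (equality at $a=2/3$, $b=1/3$). Thus one of $G_1,G_2,G_3$ breaks $\alpha$-contraction for every $\alpha$ below the claimed threshold.

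I do not expect a real obstacle: once the three graphs are chosen, the conclusion is a one-line min--max. The one thing to get right is the bookkeeping --- verifying that each $G_i$ is genuinely nonsplit, that the incoming neighbourhoods of processes~$1$ and~$2$ coincide verbatim across $\{G_1,G_3\}$ and $\{G_2,G_3\}$ so that the transition function is compelled to reuse the same~$a$ and~$b$, and that the spectator processes $3,\dots,n$ are frozen at~$0$ exactly in $G_2$ and $G_3$. This last point is also where the hypothesis $n\ge 3$ enters: it is what replaces $|a-b|$ by the larger value $\max(a,b)$ in~$G_3$, and thereby pushes the bound from~$1/3$ up to~$1/2$.
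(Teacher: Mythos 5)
Your proof is correct and rests on the same mechanism as the paper's own: small nonsplit gadget graphs in which two executions present a process with identical incoming neighbours carrying identical values, forcing the deterministic update to reuse the same output ($a$ and $b$) and yielding incompatible contraction constraints. For $n=2$ your three graphs are exactly the paper's $H^+$, $G$, $H^-$ (up to swapping the initial values), and for $n\ge 3$ the paper instead uses two graphs $K,L$ together with two different initial configurations, but the indistinguishability argument is the same and your single-configuration, three-graph variant is a clean equivalent.
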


\begin{proof}
Let ${\cal A}$ be an averaging algorithm that solves  approximate consensus and that is 
	$\alpha$-contracting.

If $G$ and $H$ are two communication graphs with the same set of nodes $[n]$,
	we say that $G$ and $H$ are {\em equivalent with respect to} $p \in [n]$,
	denoted $G\sim_p H$, when $\In_p(G) = \In_p(H)$.
	
We first study the case $n=2$ and we consider the three rounds starting with  $x_1(0)=0$, $x_2(0)=1$,
	and the three communication graphs~$G$, $H^+$, and $H^-$  (see Figure~\ref{fig:thm:lower:bound:markov:graphs}).

Because of the definition of an averaging algorithm, with the communication graph~$H^+$ we have 
	$$x_1(1) = x_1(0)=0  \    \mbox{ and } \     x_2(1) = b \in [ 0, 1] \, .$$
Likewise, with~$H^-$, we have 
	$$x_2(1) = x_2(0)=1 \    \mbox{ and } \   x_1(1) = a \in [ 0, 1]   \, . $$
Since $G \sim_2 H^+$ and $G \sim_1 H^-$,  the vector $x(1)$ obtained with the communication graph~$G$
	satisfies $$x_1(1) = a \    \mbox{ and } \     x_2(1) = b \, .$$
Because ${\cal A}$ is $\alpha$-contracting by assumption, we get
	$$  b  \leq \alpha, \quad    1-a  \leq \alpha,\ \mbox{ and } \   \lvert a-b\rvert \leq \alpha \, .$$
Summing these three inequalities gives $\alpha \geq 1/3$ for $n=2$ as required.

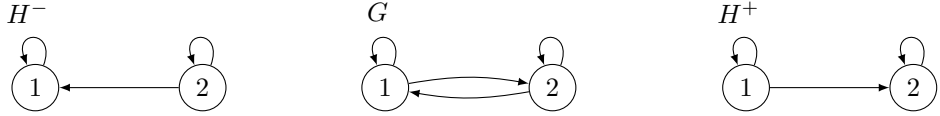
\begin{figure}
\centering
\begin{tikzpicture}[>=latex]
\tikzset{every loop/.style={min distance=5mm,in=0,out=60,looseness=5}}

\node at (-1.2,1.0) {${H^-}$};

\node[draw, circle] (n1) at (180:1.1) {$1$};
\node[draw, circle] (n2) at (0:1.1) {$2$};

\draw[<-] (n1) edge    (n2);
\path[->] (n1) edge [in=110,out=70,looseness=6] (n1);
\path[->] (n2) edge [in=110,out=70,looseness=6] (n2);



\end{tikzpicture}
\hspace{1.5cm}
\begin{tikzpicture}[>=latex]
\tikzset{every loop/.style={min distance=5mm,in=0,out=60,looseness=5}}

\node at (-1.2,1.0) {${G}$};

\node[draw, circle] (n1) at (180:1.1) {$1$};
\node[draw, circle] (n2) at (0:1.1) {$2$};


\draw[->] (n1) edge   [bend left=10] (n2);
\draw[<-] (n1) edge   [bend right=10] (n2);
\path[->] (n1) edge [in=110,out=70,looseness=6] (n1);
\path[->] (n2) edge [in=110,out=70,looseness=6] (n2);



\end{tikzpicture}
\hspace{1.5cm}
\begin{tikzpicture}[>=latex]
\tikzset{every loop/.style={min distance=5mm,in=0,out=60,looseness=5}}

\node at (-1.2,1.0) {${H^+}$};

\node[draw, circle] (n1) at (180:1.1) {$1$};
\node[draw, circle] (n2) at (0:1.1) {$2$};


\draw[<-] (n2) edge    (n1);
\path[->] (n1) edge [in=110,out=70,looseness=6] (n1);
\path[->] (n2) edge [in=110,out=70,looseness=6] (n2);

\end{tikzpicture}
\caption{Communication graphs $G$, $H^+$, and $H^-$ used in the proof of Theorem~\ref{thm:contractrate} for $n=2$}
\label{fig:thm:lower:bound:markov:graphs}
\end{figure}

\bigskip

\noindent
Assume now that $n\geq 3$.
Let us consider the following two communication graphs depicted in Figure~\ref{fig:thm:lower:bound:markov:graphs:general}:
\begin{itemize}
\item  Graph $K$:  nodes~$2,3,\dots,n$ are fully connected and
	there is the additional edge $(2,1)$
\item Graph $L$: nodes~$2,3, \dots,n$ are fully connected and
	there are the additional edges $(p,1)$ for all nodes $p\in \{3,4,\dots,n\}$
\end{itemize}

We easily check that both $K$ and $L$ are nonsplit.
We
  consider two executions of ${\cal A}$:
The first with the communication graph~$K$ in  the first round and  the initial configuration $x_1(0)= x_2(0) =0$, and $x_p(0)=1$ for $p\in\{3,4,\dots,n\}$.
The second with the communication graph~$L$ in  the first round and  the initial configuration $x_1'(0)=1$, $x_2'(0) =0$, and $x_p'(0)=1$ for $p\in\{3,4,\dots,n\}$.
Because of the definition of an averaging algorithm, in the first execution with the communication graph~$K$, we have 
	$$x_1(1) = 0  \    \mbox{ and } \     x_2(1) = a \in [ 0, 1] \, .$$
Likewise, in the second one with~$L$, we have 
	$$x_1'(1) =1 \    \mbox{ and } \   x_2'(1) = b \in [ 0 , 1]   \, . $$
From the fact that $K \sim_2 L$, $x_2(0)=x_2'(0)$, and $x_p(0)=x_p'(0)$ for all incoming neighbors~$p$ of process~$2$ in either of the two graphs, we deduce that $a = b$.
Since ${\cal A}$ is $\alpha$-contracting,  it follows that
	$$ a \leq \alpha  \    \mbox{ and } \    a \geq 1- \alpha  \, . $$
But this is only possible if  $\alpha \geq 1/2$, which concludes the proof also in the case $n\geq3$.
\end{proof}

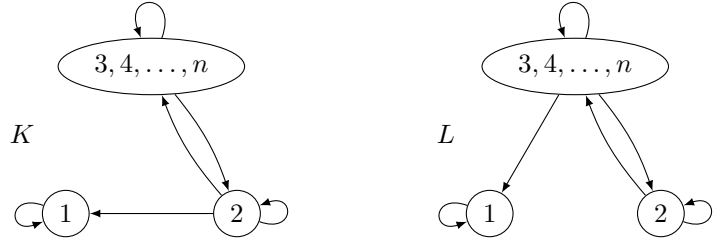
\begin{figure}
\centering
\begin{tikzpicture}[>=latex]
\tikzset{every loop/.style={min distance=5mm,in=0,out=60,looseness=5}}

\node at (-1.7,0.4) {${K}$};

\node[draw, circle] (n1) at (210:1.3) {$1$};
\path[->] (n1) edge [in=200,out=160,looseness=6] (n1);
\node[draw, circle] (n2) at (330:1.3) {$2$};
\path[->] (n2) edge [in=20,out=340,looseness=6] (n2);
\node[draw, ellipse, draw] (n3) at (90:1.3) {$3,4,\dots,n$};
\path[->] (n3) edge [in=110,out=70,looseness=6] (n3);

\draw[->] (n2) edge   [bend left=10] (n3);
\draw[<-] (n2) edge   [bend right=10] (n3);


\draw[->] (n2) edge    (n1);


\end{tikzpicture}
\hspace{1.5cm}
\begin{tikzpicture}[>=latex]
\tikzset{every loop/.style={min distance=5mm,in=0,out=60,looseness=5}}

\node at (-1.7,0.4) {${L}$};

\node[draw, circle] (n1) at (210:1.3) {$1$};
\path[->] (n1) edge [in=200,out=160,looseness=6] (n1);
\node[draw, circle] (n2) at (330:1.3) {$2$};
\path[->] (n2) edge [in=20,out=340,looseness=6] (n2);
\node[draw, ellipse, draw] (n3) at (90:1.3) {$3,4,\dots,n$};
\path[->] (n3) edge [in=110,out=70,looseness=6] (n3);

\draw[->] (n2) edge   [bend left=10] (n3);
\draw[<-] (n2) edge   [bend right=10] (n3);


\draw[->] (n3) edge    (n1);

\end{tikzpicture}
\caption{Communication graphs $K$ and $L$ used in the proof of Theorem~\ref{thm:contractrate} for the case $n\geq 3$}
\label{fig:thm:lower:bound:markov:graphs:general}
\end{figure}

\begin{algorithm}[h]
\small
\begin{algorithmic}[1]
\REQUIRE{}
  \STATE $x_p \in [0,1]$
\ENSURE{}
  \STATE send $x_p$ to other process and receive $x_q$ if $(q,p)\in E(k)$
  \IF{$x_q$ was received}
    \STATE $x_p \gets x_p/3 + 2x_q/3$
  \ENDIF
\end{algorithmic}
\caption{Averaging algorithm for two processes with contraction rate $1/3$}
\label{algo:2:procs}
\end{algorithm}

Both lower bounds in Theorem~\ref{thm:contractrate} are tight.
Indeed  the {\em midpoint algorithm} that we will introduce in Section~\ref{sec:mid}
	is $(1/2)$-safe, and so $(1/2)$-contracting in the network model of nonsplit communication
	graphs.
For a two process network,  Algorithm~\ref{algo:2:procs} has a contraction rate of $1/3$:
	 we easily check that for every positive integer~$k$,
	$$ \lvert x_1(k) - x_2(k)\rvert = \frac{\lvert x_1(k-1) - x_2(k-1) \rvert} {3}$$
	be the communication graph at round~$k$ equal to $H^+$, $H^-$, or $G$.

\section{Amortized averaging algorithms }\label{sec:amortized}

Proposition~\ref{prop:productrooted} and Theorem~\ref{thm:nonsplit} suggest to consider a new type 
	of distributed algorithms, that we call  {\em amortized averaging algorithms}, in which each process repeatedly
	first collects values during $n-1$ rounds, and then computes a weighted  average 
	of values it has received every $n-1$ rounds.
Roughly speaking, an amortized averaging algorithm is an averaging algorithm with the granularity of macro-rounds 
	consisting in  blocks of $n-1$ consecutive rounds.

We now make this notion 	more precise.
First let us fix some notation.
Macro-round $\ell$ is the sequence of rounds $(\ell-1)(n-1) +1, \ldots, 	\ell (n-1)$.
We consider algorithms for which every variable $x_p$ is updated only at the end of macro-rounds;
	$x_p(\ell)$ will denote the value of $x_p$ at  the end of round~$\ell (n-1)$, as no confusion can arise.
Given some communication pattern~$(G(k))_{k\ge 1}$,   the communication graph at macro-round~$\ell$
	is equal to:
	$$ \hat{G}(\ell) = G((\ell-1)(n-1) +1)  \circ \ldots \circ G(	\ell (n-1)) \, .$$ 
	
Each process $p$ can record the set of values it has received during macro-round~$\ell$, 
	namely the set $V_p(\ell) = \{ x_q(\ell -1) \mid q \in \In_p	\big( \hat{G}(\ell) \big) \}$,
	but in anonymous networks, $p$ cannot determine the set of its incoming neighbors
	in $\hat{G}(\ell)$.
This is in contrast to networks with unique process identifiers where each process~$p$ can determine 
	the membership of  $ \In_p \big( \hat{G}(\ell) \big) $ 
	by piggybacking the name of sender onto every message, and so can compute 
	the set
	$W_p(\ell) = \{ \big (q, x_q(\ell -1) \big ) \mid q \in \In_p	\big( \hat{G}(\ell) \big) \}$.
In particular, each process can determine the 
	multiset of values that it has received during a macro-round, counted with their multiplicities.

In consequence in any anonymous network, we can define the {\em  amortized version of} 
	an averaging algorithm ${\cal A}$ with weights in update rules that depend only on the sets of received 
	values: at the end of every macro-round~$\ell$, each process~$p$  adopts a new value by applying
	the same update rule as in ${\cal A}$ with the macro-set $V_p(\ell)$.
Based on the above discussion, this definition can be extended to averaging algorithms 
	with update rules involving the sets of incoming neighbors when processes have 
	unique identifiers.
For instance, the amortized version of the mean-value algorithm is defined in any anonymous network while
	the amortized equal-neighbor algorithm requires to have unique process identifiers. 
	
In both cases, the new value adopted by process~$p$ at the end of macro-round~$\ell$ lies within the interval
 	formed by the values of its incoming neighbors in the communication graph~$\hat{G}(\ell)$:
	 if  $\hat{m}_p(\ell-1)$  is the minimum of the values in $V_p(\ell)$ 
	 and $\hat{M}_p(\ell-1)$ is the maximum, then
	$$  \hat{m}_p(\ell -1) \leq x_p(\ell ) \leq   \hat{M}_p(\ell-1)  \, .$$

Combining Proposition~\ref{prop:productrooted} and Theorem~\ref{thm:nonsplit}, 
	we immediately derive the following central result for amortized averaging algorithms.
	
\begin{thm}\label{thm:amortizedsafe}
In any rooted network model, the amortized version of a $\varrho$-safe averaging algorithm 
	solves approximate consensus and achieves $\varepsilon$-agreement in  
	$(n-1) \big\lceil \log_{\frac{1}{1-\varrho}}  \big( \frac{1}{\varepsilon} \big) \big\rceil$ rounds.
\end{thm}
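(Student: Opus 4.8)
The plan is to combine the three ingredients the excerpt has already assembled: the macro-round reduction, the graph-theoretic fact that products of $n-1$ rooted graphs are nonsplit (Proposition~\ref{prop:productrooted}), and the contraction bound for $\varrho$-safe averaging algorithms on nonsplit graphs (Theorem~\ref{thm:nonsplit}). The key observation is that the amortized version of a $\varrho$-safe averaging algorithm~$\mathcal{A}$, viewed at the granularity of macro-rounds, is \emph{itself} an averaging algorithm that is $\varrho$-safe with respect to the sequence of macro-round communication graphs $\hat{G}(\ell)$. So the proof is essentially a change of time-scale followed by an application of an already-proven theorem.

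\textbf{Step 1: The amortized algorithm is a $\varrho$-safe averaging algorithm over macro-rounds.} First I would note that in a rooted network model $\mathcal{N}$, every communication pattern $(G(k))_{k\ge1}$ with graphs in $\mathcal{N}$ induces, via the definitions already fixed in the excerpt, a sequence of macro-round graphs $\hat{G}(\ell) = G((\ell-1)(n-1)+1)\circ\cdots\circ G(\ell(n-1))$. By Proposition~\ref{prop:productrooted}, each $\hat{G}(\ell)$ is nonsplit (it is a product of $n-1$ rooted graphs with $n$ nodes and self-loops). Now consider the execution of the amortized version of $\mathcal{A}$: the sequence $(x_p(\ell))_{\ell\ge0}$ obtained by reading the state only at the ends of macro-rounds. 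By construction, at each macro-round $\ell$ each process~$p$ applies the update rule of~$\mathcal{A}$ to the macro-set $V_p(\ell)$ of values received from its incoming neighbors in $\hat{G}(\ell)$. Since $\mathcal{A}$ is $\varrho$-safe, this update satisfies
$$\varrho\,\hat{M}_p(\ell-1) + (1-\varrho)\,\hat{m}_p(\ell-1) \le x_p(\ell) \le (1-\varrho)\,\hat{M}_p(\ell-1) + \varrho\,\hat{m}_p(\ell-1),$$
where $\hat{m}_p(\ell-1)$ and $\hat{M}_p(\ell-1)$ are the minimum and maximum of $V_p(\ell)$. Hence the macro-round process is a $\varrho$-safe averaging algorithm running on the network model generated by the nonsplit graphs $\{\hat{G}(\ell)\}$.

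\textbf{Step 2: Apply Theorem~\ref{thm:nonsplit} and count rounds.} By Theorem~\ref{thm:nonsplit}, a $\varrho$-safe averaging algorithm in a nonsplit network model is $(1-\varrho)$-contracting and achieves $\varepsilon$-agreement within $\big\lceil\log_{\frac{1}{1-\varrho}}(1/\varepsilon)\big\rceil$ of its own rounds — here, macro-rounds. Since each macro-round consists of exactly $n-1$ ordinary rounds, $\varepsilon$-agreement (and hence decision, as the decision rule follows the agreement bound) is reached within $(n-1)\big\lceil\log_{\frac{1}{1-\varrho}}(1/\varepsilon)\big\rceil$ rounds. Validity follows because the value of each $x_p$ after every macro-round lies in the interval spanned by values from the previous macro-round, so by induction all values stay in the range of the initial values; this is exactly the argument inside the proof of Proposition~\ref{prop:contract}, now applied at macro-round granularity.

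\textbf{The one point requiring a little care} — the ``obstacle,'' though it is minor — is that Proposition~\ref{prop:productrooted} only guarantees nonsplitness of a product of \emph{exactly} $n-1$ rooted graphs, so the macro-round decomposition must use blocks of length precisely $n-1$ (not fewer), which is why the theorem statement has the factor $(n-1)$; I would make sure the indexing of rounds within macro-rounds matches the definition already given in the text. A secondary subtlety worth a sentence is that the amortized version is well-defined in any anonymous network only when $\mathcal{A}$'s weights depend on the \emph{set} (not multiset) of received values; for algorithms whose weights depend on the multiset or on sender identities, the amortized version requires unique identifiers, but the excerpt has already addressed this, so the theorem statement applies verbatim to whichever class of averaging algorithm $\mathcal{A}$ belongs to. Everything else is bookkeeping: the proof is short precisely because Proposition~\ref{prop:productrooted} and Theorem~\ref{thm:nonsplit} do all the real work.
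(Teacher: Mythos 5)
Your proof is correct and is essentially the paper's own argument: the paper derives this theorem by exactly the same combination of Proposition~\ref{prop:productrooted} (macro-round graphs are nonsplit) and Theorem~\ref{thm:nonsplit} (applied at macro-round granularity), with the factor $n-1$ accounting for the change of time-scale. Your write-up merely spells out the details the paper leaves implicit.
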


In order to fix decision times, we have assumed from the beginning that each process
	knows the number~$n$ of processes or at least an upper bound on~$n$.
However, regarding the {\em asymptotic consensus} problem -- obtained by substituting limit values
	for decision values in the specification of approximate consensus --, this global knowledge on~$n$ 
	is actually useless  for averaging algorithms.
	
In contrast,  update rules in amortized averaging algorithms require that the
	number of processes in the network is known to all processes.
In other words,  amortized averaging algorithms are defined only under 
	the assumption of this global knowledge in the network, 
	even for solving asymptotic consensus.
In fact, we can immediately adapt the definition of amortized averaging algorithms to the
	case where~$n$ is a fixed parameter and then easily verify that Theorem~\ref{thm:amortizedsafe}
	still holds when $n$ is only an upper bound on 	 the exact number of processes.

\subsection{A quadratic-time algorithm in rooted networks with process identifiers}

In~\cite{CBFN15}, we proposed an approximate consensus 
	 algorithm with a quadratic decision time. 
The algorithm (cf.\ Algorithm~\ref{algo:translation}) 
	does not work in anonymous networks as it uses process identifiers 
	so that each process can determine whether some value that it has received several times
	during a macro-round is originated  from the same process or not.

\begin{algorithm}[h]
\small
\begin{algorithmic}[1]
\REQUIRE{}
  \STATE $x_p \in [0,1]$ and $W_p \leftarrow \{(p,x_p)\}$
\ENSURE{}
  \STATE send $W_p$ to all processes in $\Out_p(k)$ and receive $W_q$ from all processes $q$ in $\In_p(k)$
  \STATE $W_p \leftarrow \bigcup_{q \in \In_p(k)} W_q$
  \IF{$k \equiv 0 \mod n-1$}
    \STATE $x_p \leftarrow \frac{1}{|W_p|}  \sum_{(q, v) \in W_p} v$
    \STATE $W_p \leftarrow \{(p,x_p)\}$
  \ENDIF
\end{algorithmic}
\caption{Amortized equal-neighbor algorithm}
\label{algo:translation}
\end{algorithm}

The update rule (line 5) rewrites as:
 	 	$$ x_p(\ell) =  \frac{ 1  }{\big | \In_p \big( \hat{G}(\ell) \big)\} \big |}  \sum_{q \in \In_p \big( \hat{G}(\ell) \big) }  \, x_q(\ell -1) \, .$$
Hence Algorithm~\ref{algo:translation} is the amortized version of the equal-neighbor 
	algorithm.
	
From Proposition~\ref{prop:EN+MV} and Theorem~\ref{thm:amortizedsafe}, it follows that  Algorithm~\ref{algo:translation} 
	solves approximate consensus.
Moreover, because of the inequality $ \log(1-a )\leq -a $  when 
	$0 \leq a$  and
	because $ \delta \left( x(0 )\right) \leq 1$, it follows that 
		if $\ell  \geq n \log \frac{1}{\varepsilon}$,
		then $ \delta \left( x(\ell)\right) \leq\varepsilon$. 
Hence Algorithm~\ref{algo:translation}  achieves $\varepsilon$-agreement in 
	$ O \left ( n ^{2} \log \frac{1}{\varepsilon} \right ) $ rounds.

\subsection{A quadratic-time algorithm in anonymous rooted networks}

As we will show below, Algorithm~\ref{algo:translation} still works when 
	processes just collect  values without taking into account processes from which they originate.

\begin{algorithm}[h]
\small
\begin{algorithmic}[1]
\INITIALLY{}
  \STATE $x_p \in [0,1]$
  \STATE $V_p \subseteq V $, initially $ \emptyset $ 
\ROUND{}
  \STATE send $V_p$ to all processes in $\Out_p(k)$ and receive $V_q$ from all processes $q$ in $\In_p(k)$
  \STATE $V_p \leftarrow \bigcup_{q \in \In_p(k)} V_q$
  \IF{$k \equiv 0 \mod n-1$}
     \STATE $x_p \leftarrow \frac{1}{|V_p|} \sum_{v \in V_p}  v  $ 
    \STATE $V_p \leftarrow \emptyset $
  \ENDIF
\end{algorithmic}
\caption{Amortized mean-value algorithm}
\label{algo:quadratic}
\end{algorithm}

At each macro-round~$\ell$, the update rule in the resulting algorithm for anonymous networks 
	(cf.~Algorithm~\ref{algo:quadratic}, line 6) coincides with the update rule in the mean-value algorithm.
In other words, Algorithm~\ref{algo:quadratic} is the amortized version of the mean-value algorithm.	
Since the latter algorithm is a $(1/n)$-safe averaging algorithm (Proposition~\ref{prop:EN+MV}), 
	we may apply Theorem~\ref{thm:amortizedsafe} and obtain the following result.

\begin{thm}\label{thm:amormv}
In a rooted network model of $n$ processes,  the amortized mean-value algorithm solves
	approximate consensus and achieves $\varepsilon$-agreement
	in $O(n^2  \log  \frac{1}{\varepsilon} ) $ rounds.
\end{thm}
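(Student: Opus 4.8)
The plan is to obtain the statement as a direct corollary of Theorem~\ref{thm:amortizedsafe}, exactly as the discussion preceding it suggests. The first step is to record what has already been observed: Algorithm~\ref{algo:quadratic} is precisely the amortized version of the mean-value algorithm, since its update rule at the end of each macro-round~$\ell$ is the mean-value rule~\eqref{eq:MV} applied to the macro-set $V_p(\ell)$. Next I would invoke Proposition~\ref{prop:EN+MV}, which gives that the mean-value algorithm is $(1/n)$-safe in any network model with $n$ processes, and then specialize Theorem~\ref{thm:amortizedsafe} with $\varrho = 1/n$. This at once yields that the amortized mean-value algorithm solves approximate consensus in any rooted network model of $n$ processes and achieves $\varepsilon$-agreement within $(n-1)\big\lceil \log_{\frac{1}{1-1/n}}\big(\frac{1}{\varepsilon}\big)\big\rceil$ rounds.

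It then remains only to rewrite this bound in the asymptotic form stated. Passing to the natural logarithm,
$$
\log_{\frac{1}{1-1/n}}\Big(\frac{1}{\varepsilon}\Big) = \frac{\ln\frac{1}{\varepsilon}}{-\ln\big(1-\tfrac{1}{n}\big)} \, ,
$$
and using the elementary inequality $\ln(1-a)\le -a$ valid for $0\le a<1$ --- the very inequality already used above in the analysis of Algorithm~\ref{algo:translation} --- with $a=1/n$ gives $-\ln(1-1/n)\ge 1/n$, hence $\log_{\frac{1}{1-1/n}}\big(\frac{1}{\varepsilon}\big)\le n\ln\frac{1}{\varepsilon}$. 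Consequently the number of rounds needed for $\varepsilon$-agreement is at most $(n-1)\big(n\ln\frac{1}{\varepsilon}+1\big)=O\big(n^2\log\frac{1}{\varepsilon}\big)$, as claimed.

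There is essentially no real obstacle here: the result is a one-line specialization of Theorem~\ref{thm:amortizedsafe} followed by a routine logarithm estimate. The only point deserving a moment's care is to bound the denominator $-\ln(1-1/n)$ from below by $1/n$ rather than treating it as a constant --- it is only $\Theta(1/n)$ --- so that the final bound comes out quadratic in $n$ and not, say, of order $n\log n\cdot\log\frac{1}{\varepsilon}$; the inequality $-\ln(1-a)\ge a$ is exactly what delivers this and is already part of the paper's toolkit. If one wished to sidestep Theorem~\ref{thm:amortizedsafe} altogether, one could instead apply Proposition~\ref{prop:productrooted} together with Theorem~\ref{thm:nonsplit} at the granularity of macro-rounds of length $n-1$, but that merely re-proves the needed instance of Theorem~\ref{thm:amortizedsafe}.
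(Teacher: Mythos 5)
Your proposal is correct and follows exactly the paper's route: identify Algorithm~\ref{algo:quadratic} as the amortized mean-value algorithm, invoke Proposition~\ref{prop:EN+MV} for $(1/n)$-safeness, apply Theorem~\ref{thm:amortizedsafe} with $\varrho=1/n$, and convert the resulting bound to $O(n^2\log\frac{1}{\varepsilon})$ via $-\ln(1-1/n)\ge 1/n$. The paper carries out the same logarithm estimate in its discussion of the amortized equal-neighbor algorithm, so there is nothing to add.
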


\subsection{A linear-time algorithm in anonymous rooted networks}\label{sec:mid}

We improve the above quadratic upper bound on decision times with a linear amortized averaging algorithm
	which differs from our previous algorithm in the update rule: each process adopts the mid-point of 
	the range of values it has received  during a macro-round (cf.\ Algorithm~\ref{algo:mid}).
	
\begin{algorithm}[h]
\small
\begin{algorithmic}[1]
\INITIALLY{}
  \STATE $x_p \in [0,1]$
  \STATE $m_p \in [0,1]$, initially $x_p$ 
  \STATE $M_p \in [0,1]$, initially $x_p$ 
\ROUND{}
  \STATE send $(m_p,M_p)$ to all processes in $\Out_p(k)$ and receive $(m_q,M_q)$ from all processes $q$ in $\In_p(k)$
  \STATE $m_p \gets \min\big\{ m_q \mid q\in \In_p(k)\big\}$
  \STATE $M_p \gets \max\big\{ M_q \mid q\in \In_p(k)\big\}$
  \IF{$k \equiv 0 \mod n-1$}
    \STATE $x_p \gets (m_p+M_p)/2$
    \STATE $m_p \gets x_p$
    \STATE $M_p \gets x_p$
  \ENDIF
\end{algorithmic}
\caption{Amortized mid-point algorithm}
\label{algo:mid}
\end{algorithm}

Let us now consider  the {\em mid-point algorithm} that is the simple averaging algorithm
	with the mid-point update rule.
In other words,  Algorithm~\ref{algo:mid} is the amortized version of  the mid-point algorithm.

By definition, the mid-point algorithm is $(1/2)$-safe.
By Theorem~\ref{thm:nonsplit}, it follows that the mid-point algorithm is 
	a $(1/2)$-contracting  approximate consensus algorithm in any nonsplit network
	model, with a constant decision time.
This improves the linear time complexity of the equal neighbor averaging algorithm~\cite{CBFN15} 
	for this type of network models and demonstrates that  the $1/2$ lower bound 
	in Theorem~\ref{thm:contractrate} for networks with $n\geq 3$ 
	processes is tight.
                                                           
Furthermore Theorem~\ref{thm:amortizedsafe} implies that  Algorithm~\ref{algo:mid} solves approximate 
	consensus  in any rooted network model with a decision time in $O(n  \log  \frac{1}{\varepsilon} ) $ rounds.

\begin{thm}\label{thm:amormp}
In a rooted network model of $n$ processes, 
	the amortized mid-point algorithm solves approximate consensus and achieves $\varepsilon$-agreement 
	in $(n-1)\big\lceil \log_2\frac{1}{\varepsilon} \big\rceil$ rounds.
\end{thm}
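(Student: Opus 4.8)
The plan is to reduce the statement entirely to Theorem~\ref{thm:amortizedsafe} by checking that the mid-point averaging algorithm is $(1/2)$-safe and that Algorithm~\ref{algo:mid} is genuinely its amortized version in the sense of Section~\ref{sec:amortized}. For the safeness, I would simply observe that the mid-point update rule $x_p(\ell) = \big(\hat m_p(\ell-1) + \hat M_p(\ell-1)\big)/2$ satisfies the two-sided bound defining $\varrho$-safeness with $\varrho = 1/2$, both inequalities holding with equality; hence the mid-point algorithm is $(1/2)$-safe in every network model, in particular in every nonsplit one.

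Next I would verify that Algorithm~\ref{algo:mid} really computes, at the end of each macro-round $\ell$, the values $\hat m_p(\ell-1) = \min V_p(\ell)$ and $\hat M_p(\ell-1) = \max V_p(\ell)$. This is a short induction on the round index $j$ inside macro-round $\ell$: after line~3 of round $(\ell-1)(n-1)+j$, the local variables $m_p$ and $M_p$ hold the minimum and maximum of the set $\{x_q(\ell-1) \mid q \in \In_p(G((\ell-1)(n-1)+1)\circ\cdots\circ G((\ell-1)(n-1)+j))\}$; the self-loops guarantee the in-neighbourhoods grow monotonically, and after $j = n-1$ rounds this set is exactly $V_p(\ell)$ for the macro-round communication graph $\hat G(\ell)$. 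Thus Algorithm~\ref{algo:mid} is precisely the amortized version of the mid-point algorithm, and its update at macro-round $\ell$ satisfies $\hat m_p(\ell-1) \le x_p(\ell) \le \hat M_p(\ell-1)$.

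Finally I would apply Theorem~\ref{thm:amortizedsafe} with $\varrho = 1/2$: in any rooted network model the amortized version of a $(1/2)$-safe averaging algorithm solves approximate consensus and achieves $\varepsilon$-agreement in $(n-1)\big\lceil \log_{1/(1-1/2)}(1/\varepsilon)\big\rceil$ rounds, and since $1/(1-1/2) = 2$ this is exactly $(n-1)\big\lceil \log_2 \frac{1}{\varepsilon}\big\rceil$ rounds, as claimed. Validity and termination come for free from the same theorem (the decision rule being triggered at macro-round $\big\lceil \log_2 \frac{1}{\varepsilon}\big\rceil$). There is no serious obstacle here: all the work has already been done in Proposition~\ref{prop:productrooted} and Theorems~\ref{thm:nonsplit} and~\ref{thm:amortizedsafe}, so the only point needing care is the bookkeeping in the second paragraph confirming that the two scalar variables $m_p, M_p$ correctly aggregate the macro-round min and max; everything else is substitution.
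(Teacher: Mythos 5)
Your proposal is correct and follows exactly the paper's own route: observe that the mid-point update rule is $(1/2)$-safe by definition, identify Algorithm~\ref{algo:mid} as the amortized version of the mid-point algorithm, and apply Theorem~\ref{thm:amortizedsafe} with $\varrho=1/2$ to get $(n-1)\big\lceil \log_2\frac{1}{\varepsilon}\big\rceil$ rounds. Your second paragraph merely makes explicit the bookkeeping that the paper leaves implicit, and it is sound.
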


Hence the amortized mean-value algorithm and  the amortized mid-point algorithm both work in any anonymous 
	network model, under the assumption that the number of processes is known to all processes.
However while the first algorithm has quadratic decision times and requires each process to store $n$ values
	per round and to transmit $n$ values per message,
	the amortized mid-point algorithm solves approximate consensus in linear-time and with only a constant number
	of (namely, two)  values per  process and per message.
A similar result has been recently obtained by Olshevsky~\cite{Ols15} with a linear-time algorithm where
	each process maintains two variables.
Unfortunately, this algorithm works with a fixed communication graph that further ought to be 
	bidirectional and connected.

\section{Resiliency of amortized averaging algorithms}\label{sec:resiliency}

%

In this section, we discuss the resiliency of our amortized averaging algorithms against 
	a wrong estimate of the number of processes or against a partial failure of the assumption that communication graphs
	are permanently rooted.

Firstly consider some communication pattern  in which only part of communication graphs are rooted:
	suppose that $N-1$ communication graphs in any macro-round of $n-1$ consecutive rounds are guaranteed to be rooted.
Then there are at least $n-1$ rooted communication graphs  in any sequence of 
	$ L=  \left \lceil \frac{n-1}{N-1} \right\rceil$ macro-rounds of length $n-1$,
	and every product of~$L$  communication graphs 
	of macro-rounds is nonsplit.

Secondly suppose that the network model is indeed rooted but processes do not
know the exact number~$n$
	of processes and only knows an estimate~$N$ on $n$.
Macro-rounds in the amortized averaging algorithms then consist of $N-1$ rounds (instead of $n-1$),
	and so the cumulative communication graphs in such macro-rounds may be not nonsplit
	when $N < n$.
However the communication graph over any block of $L= \left \lceil \frac{n-1}{N-1} \right\rceil$ 
	macro-rounds of length $N-1$ is nonsplit.

In both cases, the above discussion leads to introduce the notion of  $K$-{\em nonsplit\/} network model defined as 
	any network model~${\cal N}$ such that every product of $K$  communication graphs 
	from~${\cal N}$ is nonsplit.
Theorem~\ref{thm:nonsplit} can then be extended as follows:

\begin{thm}\label{thm:Knonsplit}
In a $K$-nonsplit network model, a $\varrho$-safe averaging algorithm is $(1-\varrho^K)$-contracting
	over  each block of $K$ consecutive rounds, i.e., for every non negative integer~$k$, we have
	$$ \delta \big( x(k+K) \big) \leq ( 1-\varrho^K) \, \delta \big( x(k) \big) \, .  $$
\end{thm}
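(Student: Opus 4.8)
The plan is to mimic the proof of Theorem~\ref{thm:nonsplit} but one macro-block at a time, tracking how the $\varrho$-safe bound compounds over $K$ consecutive rounds. First I would fix an arbitrary round index $k$ and analyze the block of rounds $k+1,\dots,k+K$. Let $\hat G = G(k+1)\circ\cdots\circ G(k+K)$ be the product graph over this block; by hypothesis this network is $K$-nonsplit, so $\hat G$ is nonsplit, and hence any two processes $p,q$ have a common incoming neighbor in $\hat G$. The key is to interpret a path of length $K$ in the product as a chain of single-round averaging steps and to show that, along such a chain, the value that reaches $p$ at round $k+K$ from a fixed source $r$ at round $k$ is ``pulled toward'' $x_r(k)$ by a factor at least $\varrho^K$.

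The main step is a quantitative claim: if $r$ is a common incoming neighbor of $p$ and $q$ in $\hat G$, then there is a value $v$, lying in the convex hull of $\{x_s(k): s\in[n]\}$, such that
\[
\varrho^K x_r(k) + (1-\varrho^K)\,m(k) \;\le\; x_p(k+K) \;\le\; (1-\varrho^K)\,M(k) + \varrho^K x_r(k),
\]
where $m(k) = \min_s x_s(k)$ and $M(k)=\max_s x_s(k)$, and similarly for $q$. To get this I would induct on $K$: a single $\varrho$-safe step gives, for an incoming neighbor $r$ of $p$ in $G(k+1)$, the bound $\varrho x_r(k) + (1-\varrho)m_p(k) \le x_p(k+1) \le (1-\varrho)M_p(k) + \varrho x_r(k)$; feeding the resulting value as the ``source'' for the next round and using that the global min and max are monotone (non-decreasing resp.\ non-increasing, as in the proof of Proposition~\ref{prop:contract}) lets the factor multiply down to $\varrho^K$ after $K$ rounds, while the complementary mass stays inside $[m(k),M(k)]$. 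Since $r$ is common to $p$ and $q$, subtracting the two sandwich inequalities yields
\[
\bigl|x_p(k+K) - x_q(k+K)\bigr| \;\le\; (1-\varrho^K)\bigl(M(k)-m(k)\bigr) \;=\; (1-\varrho^K)\,\delta\big(x(k)\big),
\]
and taking the maximum over all pairs $p,q$ gives $\delta\big(x(k+K)\big)\le (1-\varrho^K)\,\delta\big(x(k)\big)$, which is exactly the claimed $(1-\varrho^K)$-contraction over each block of $K$ rounds.

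The hard part will be the bookkeeping in the inductive claim: making precise that a length-$K$ path in the product graph really does correspond to a legitimate sequence of one-round averaging updates whose weights are each at least $\varrho$, so that the ``distance from $x_r(k)$'' genuinely shrinks by $\varrho$ per round rather than being merely bounded by the single-round safe interval. One clean way to organize this is to set $a_j = x$-value held at round $k+j$ by the process along the chosen path and show by induction on $j$ that $\varrho^j x_r(k) + (1-\varrho^j)m(k) \le a_j \le (1-\varrho^j)M(k) + \varrho^j x_r(k)$, using the $\varrho$-safeness at step $j$ together with $m(k)\le m(k+j-1)$ and $M(k+j-1)\le M(k)$. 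Everything else is then routine, and the decision-time corollary (should one be stated) follows by iterating over $\lceil \cdot/K\rceil$ blocks exactly as in Proposition~\ref{prop:contract}.
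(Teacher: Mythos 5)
Your proposal is correct and follows essentially the same route as the paper's proof: decompose the nonsplit product graph $G(k+1)\circ\cdots\circ G(k+K)$ into two length-$K$ chains from a common ancestor~$r$ to $p$ and to $q$, then compound the $\varrho$-safe bound along each chain using the monotonicity of the global minimum and maximum. The only (cosmetic) difference is that you propagate a two-sided sandwich around $\varrho^j x_r(k)$ forward in time and subtract at the end, whereas the paper unrolls the recursion on the difference $x_{p_i}-x_{q_i}$ backward from round $k+K$; the computations are identical.
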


\begin{proof}
Let $p$ and $q$ two processes such that $ \delta \big( x(k+K) \big) = x_p(k+K)  - x_q(k+K)  $.
Because the communication graph $G(k+1)\circ \dots \circ G(k+K)$ is nonsplit, there exist $2K+1$ processes 
	$p_1, \dots, p_K, q_1, \dots, q_K, r$ such that 
	$$\left \{ \begin{array}{ll}
	p_K = p, \ q_K=q,  & \\
	p_{i-1} \in \In_{p_i} \big(G(k+i) \big), \  q_{i-1} \in \In_{q_i} \big(G(k+i) \big), & \mbox{for each integer } i, \ \ 2\leq i \leq K \\
	r \in  \In_{p_1} \big(G(k+1) \big) \cap \In_{q_1} \big(G(k+1) \big)  \, . & 
	\end{array} \right.$$

Let 	$p_K^+$ and $p_K^-$ denote two processes such that 
	$$x_{p_K^+ }(k+ K-1) = M_{p_K} (k + K-1) \mbox{ and }  x_{p_K^- }(k + K-1) = m_{p_K}  (k + K-1) \, .$$
Similarly we introduce two processes $q_K^+$ and $q_K^-$.
Because the algorithm is $\varrho$-safe, we have
\begin{equation*}
\begin{split} 
x_{p_K }(k+ \!K)  - \! x_{q_K }(k+ \!K) &  \leq  (1- \!\varrho) \big (x_{p_K^+ }(k+ \!K- \!1)  - \! x_{q_K^- }(k+ \!K- \! 1) \big)  \\ & \qquad  + \!
                                                                      \varrho \big (x_{p_K^- }(k+ \!K-1)  - \! x_{q_K^+ }(k+ \!K-1) \big) \\ 
 & \leq  (1- \! \varrho) \big (M(k) - m(k) \big)  +  \varrho \big (x_{p_{K-1}}( k+ K- \! 1)  -  x_{q_{K-1} }(k+ K- \! 1) \big) . 
\end{split}
\end{equation*}
By iterating the same argument $K$ times, we obtain
\begin{align*} 
x_{p_K }(k+ \!K)  - \! x_{q_K }(k+ \!K)  
 & \leq (1- \! \varrho) (1 + \varrho+ \dots + \varrho^{K-1}) \big (M(k) - m(k) \big)  +  \varrho^K \big (x_r( k)  -  x_r (k) \big) . 
\end{align*}
Hence we have $$    \delta \big( x(k+K) \big) \leq   (1- \! \varrho^K) \delta \big( x(k) \big) \, ,$$
which concludes the proof.
\end{proof}

As an immediate corollary of Theorem~\ref{thm:Knonsplit}, we obtain that the amortized version of a 
	$\varrho$-safe averaging algorithm is resilient against a wrong estimate of the number of processes 
	or against a partial failure of the assumption of  a rooted network model, and its decision times are 
	multiplied by a factor in $O \big( L \varrho^{-L} \big)$.
Note that the theorem measures the number of macro-rounds, and not the number of rounds.

\begin{thm}\label{thm:resiliency}
The amortized version of a $\varrho$-safe averaging algorithm solves approximate consensus 
	even with an erroneous number of processes or with a communication pattern where only part of 
	communication graphs are rooted.
Moreover it achieves $\varepsilon$-agreement in 
	$O\left( L   \varrho^{-L} \cdot \log  \big( \frac{1}{\varepsilon} \big)  \right)$ 
	macro-rounds if $N$ denotes the estimate on process number or if  only $N-1$ rounds in each block of
	$n-1$ consecutive rounds are guaranteed to have a rooted communication graph and where $L = \left\lceil\frac{n-1}{N-1}\right\rceil$.
\end{thm}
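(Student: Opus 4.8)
The plan is to reduce both failure scenarios to the $K$-nonsplit machinery of Theorem~\ref{thm:Knonsplit}, applied at the granularity of macro-rounds. First I would record that, in either scenario, the amortized version of a $\varrho$-safe averaging algorithm $\mathcal{A}$ is \emph{itself} an averaging algorithm whose ``rounds'' are the macro-rounds of length $N-1$ and whose communication graph at macro-round~$\ell$ is the product $\hat{G}(\ell)$ of the $N-1$ graphs occurring in that macro-round. Indeed, the macro-round update rule of the amortized version is the update rule of $\mathcal{A}$ applied to the macro-set $V_p(\ell)$, so $\varrho$-safeness of $\mathcal{A}$ gives $\varrho\,\hat{M}_p(\ell-1)+(1-\varrho)\,\hat{m}_p(\ell-1)\le x_p(\ell)\le (1-\varrho)\,\hat{M}_p(\ell-1)+\varrho\,\hat{m}_p(\ell-1)$; that is, the amortized version is $\varrho$-safe as an averaging algorithm over macro-rounds.

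Next I would show that the network model of macro-round graphs is $L$-nonsplit, with $L=\lceil (n-1)/(N-1)\rceil$. In the erroneous-estimate case all occurring graphs are rooted and macro-rounds have length $N-1$, so a block of $L$ macro-rounds is a product of $L(N-1)\ge n-1$ rooted graphs. In the partial-failure case macro-rounds have length $n-1$ and each contains at least $N-1$ rooted graphs, so again a block of $L$ macro-rounds contains at least $L(N-1)\ge n-1$ rooted factors, interleaved with arbitrary (but self-looped) ones. Because every communication graph has a self-loop at each node, $E(G)\cup E(H)\subseteq E(G\circ H)$, so inserting extra self-looped factors into a product only adds edges; hence the $L$-fold macro-round product is a supergraph, on the same vertex set, of the product of some $n-1$ rooted graphs, which is nonsplit by Proposition~\ref{prop:productrooted}. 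A supergraph of a nonsplit graph on the same vertices is nonsplit, so the $L$-fold macro-round product is nonsplit, which is exactly $L$-nonsplitness of the macro-round model.

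Then I would invoke Theorem~\ref{thm:Knonsplit} with $K=L$ at the macro-round granularity: for every macro-round index $\ell$,
\begin{equation*}
\delta\big(x(\ell+L)\big)\le \big(1-\varrho^{L}\big)\,\delta\big(x(\ell)\big).
\end{equation*}
Iterating over blocks of $L$ macro-rounds and using $\delta(x(0))\le 1$ gives $\delta(x(mL))\le (1-\varrho^{L})^{m}$. As in the proof of Proposition~\ref{prop:contract}, the monotonicity of the global minimum and maximum together with $m(0)\le m(\ell)\le x_p(\ell)\le M(\ell)\le M(0)$ yields convergence of each $x_p$ to a common value in the range of the initial values, so approximate consensus is solved in both scenarios. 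For the rate, $\log(1-\varrho^{L})\le -\varrho^{L}$ shows $(1-\varrho^{L})^{m}\le\varepsilon$ once $m\ge \varrho^{-L}\log\frac1\varepsilon$, i.e.\ after $mL=O\!\big(L\varrho^{-L}\log\frac1\varepsilon\big)$ macro-rounds, as claimed.

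The main obstacle is making the two reductions fully precise --- in particular, checking in the partial-failure case that the self-loop argument correctly absorbs the arbitrary graphs interleaved with the $N-1$ guaranteed-rooted graphs in each macro-round, and that the ``supergraph of a nonsplit graph is nonsplit'' step is applied on a fixed vertex set. Once $L$-nonsplitness of the macro-round model is established, the remainder is a routine re-run of Theorem~\ref{thm:Knonsplit} and Proposition~\ref{prop:contract}, now counting macro-rounds rather than rounds.
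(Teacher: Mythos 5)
Your proposal is correct and follows essentially the same route as the paper, which derives this theorem as an immediate corollary of Theorem~\ref{thm:Knonsplit} after the preceding discussion reduces both failure scenarios to an $L$-nonsplit model of macro-round graphs. You merely make explicit the steps the paper leaves implicit (the $\varrho$-safeness of the amortized algorithm at macro-round granularity, the self-loop/supergraph argument for $L$-nonsplitness, and the iteration $\delta(x(mL))\le(1-\varrho^L)^m$ with $\log(1-\varrho^L)\le-\varrho^L$), all of which are sound.
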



\section{Rounding}\label{sec:rounding}

In this section, we take into account the additional constraint that processes can only store
	and transmit quantized values.
This model provides a good approximation for networks with storage constraints or
	with finite bandwidth channels.
	
We include this constraint by quantizing each averaging update rule.
For that, we fix some positive integer~$Q$ and choose a rounding function, denoted  $[\,  . \,  ]$,
	which rounds down (or rounds up) to the nearest multiple of $1/Q$. 
Then the quantized update rule for process~$p$ at round~$k$ writes	
	\begin{equation*}\label{eq:roundingupdate}
	x_p(k) = \left [    \sum_{q \in \In_p(k)} w_{qp}(k) \, x_q(k-1)   \right ] 
	\end{equation*}
where the $w_{qp}(k)$ denote the weights in the average of the original algorithm.
Besides we assume that all initial values are multiples of $1/Q$.	

\subsection{Quantization and  mid-point}

Nedi\'c et al.~\cite{NOOT09} proved that in any strongly connected network model, every 
	quantized averaging algorithm  with the update rule~(\ref{eq:roundingupdate}) solve 
	exact consensus (and so approximate consensus). 
Because of the impossibility result for exact consensus~\cite{CBFN15}, their result does not
	hold if the strong connectivity assumption is weakened into the one of rooted network models. 
	
For the same reason, if $\varepsilon < 1/Q$, then $\varepsilon$-consensus cannot be generally 
	achieved in a rooted network model by a quantized averaging algorithm or  its amortized 
	version.
In this section, we prove that the quantization of the amortized mid-point algorithm 
	indeed achieves $1/Q$-agreement in any rooted network model. 
In case one is interested in obtaining a (suboptimal) precision $\varepsilon > 2/Q$, a small adaptation of the proof
        yields that this is already achieved earlier on in
        round $(n-1)\left( \big\lfloor \log_2 \frac{Q-2}{Q\varepsilon - 2} \big\rfloor +1 \right)$.

\begin{thm}\label{thm:roundingmid}
In a rooted network model, quantization of the amortized mid-point algorithm achieves
	$1/Q$-agreement by round~$ (n-1)\left( \big\lfloor \log_2 (Q-2) \big\rfloor +2 \right)$. 
Moreover	in every execution, the sequence of values of every process $p$ converges to a limit $x_p^*$ that is
	a multiple of $1/Q$ in finite time, and
	for every pair of processes $p,\, q$, we have either $x_p^* = x_q^* $ or $\big|x_p^* - x_q^*\big|=1/Q $.
\end{thm}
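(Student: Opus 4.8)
The plan is to analyze the quantized amortized mid-point algorithm by tracking how the diameter $\delta\big(x(\ell)\big)$ of the set of current values shrinks from one macro-round to the next, exactly as in the proof of Theorem~\ref{thm:nonsplit}, but now taking into account the error introduced by the rounding function. First I would note that, by Proposition~\ref{prop:productrooted}, the communication graph $\hat{G}(\ell)$ at each macro-round is nonsplit, so any two processes $p,q$ receive a common value $v$ during macro-round $\ell$. Applying the mid-point rule and the rounding function, $x_p(\ell)$ is within $1/Q$ (or, if the rounding can go either way relative to the true midpoint, within $1/Q$ in absolute value) of $\tfrac12(\hat m_p(\ell-1)+\hat M_p(\ell-1))$, and similarly for $q$. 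Since $v$ lies in both $[\hat m_p(\ell-1),\hat M_p(\ell-1)]$ and $[\hat m_q(\ell-1),\hat M_q(\ell-1)]$, the computation in the proof of Theorem~\ref{thm:nonsplit} gives
\[
\big|x_p(\ell)-x_q(\ell)\big| \le \tfrac12\,\delta\big(x(\ell-1)\big) + \tfrac{1}{Q}.
\]
Hence $\delta\big(x(\ell)\big) \le \tfrac12\,\delta\big(x(\ell-1)\big) + \tfrac1Q$ for every macro-round $\ell$. (One has to be a little careful: the rounding error is $1/Q$ total only because the rounding function is a fixed rounding-down or rounding-up, so each endpoint shift is in a controlled direction; I would state the bound $|\,[a]-a\,|\le 1/Q$ and use it for both $p$ and $q$.)

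Next I would iterate this recurrence. With $\delta\big(x(0)\big)\le 1$, an easy induction gives $\delta\big(x(\ell)\big) \le 2^{-\ell} + \tfrac1Q(1 + \tfrac12 + \cdots) \le 2^{-\ell} + \tfrac2Q$, but this only yields $2/Q$-agreement in the limit, not $1/Q$. To sharpen it to $1/Q$ I would use the fact that all values stay multiples of $1/Q$: write $y_p(\ell) = Q\,x_p(\ell) \in \mathbb Z$, so that the recurrence becomes $\delta\big(y(\ell)\big) \le \tfrac12 \delta\big(y(\ell-1)\big) + 1$, i.e. $\delta\big(y(\ell)\big) \le \big\lfloor \tfrac12\delta\big(y(\ell-1)\big)\big\rfloor + 1$ because $\delta(y(\ell))$ is an integer (here I'd need that the right-hand side of the real recurrence, being $\le$ an integer-plus-one-half quantity, forces the integer $\delta(y(\ell))$ down to $\lfloor\cdot\rfloor+1$; a short case check on parity does this). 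Then starting from $\delta\big(y(0)\big)\le Q$, the map $t\mapsto \lfloor t/2\rfloor + 1$ drives any integer down to the fixed point $1$ (or $2$, then $1$), and counting iterations gives that $\delta\big(y(\ell)\big) \le 1$ — that is, $\delta\big(x(\ell)\big)\le 1/Q$ — once $\ell \ge \lfloor \log_2(Q-2)\rfloor + 2$. This yields the claimed round number $(n-1)(\lfloor\log_2(Q-2)\rfloor+2)$ and, by tracking the faster initial decrease, the stated intermediate bound $(n-1)(\lfloor\log_2\frac{Q-2}{Q\varepsilon-2}\rfloor+1)$ for precision $\varepsilon > 2/Q$.

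For the second part — convergence in finite time to limits $x_p^*$ that differ pairwise by $0$ or $1/Q$ — I would argue as follows. Once $\delta\big(x(\ell)\big) \le 1/Q$, all values lie among at most two consecutive multiples of $1/Q$; moreover $m(\ell)$ is non-decreasing and $M(\ell)$ non-increasing (as in Proposition~\ref{prop:contract}), and both are integer multiples of $1/Q$, so each stabilizes after finitely many further macro-rounds, say from macro-round $\ell_0$ on. From that point $\{x_p(\ell)\}$ is confined to $\{c/Q, (c+1)/Q\}$ for a fixed integer $c$. It remains to see each $x_p$ eventually stops oscillating between these two values; here I'd use that when $\hat m_p = \hat M_p$ the value is fixed, and when $\hat m_p = c/Q,\ \hat M_p=(c+1)/Q$ the true midpoint is $(c+\tfrac12)/Q$ whose rounding (down, say) is $c/Q$ deterministically — so once $m$ and $M$ are frozen, the update rule is a fixed deterministic function of $(\hat m_p,\hat M_p)$ and each process lands at a value from which, together with whatever common value it receives, it cannot move further; a clean way is to observe that the global minimum is frozen and is achieved at some process, hence is received (via nonsplitness/rootedness) by every process in each subsequent macro-round, pinning their midpoints. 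Concluding that each $x_p$ reaches its limit $x_p^*$ in finite time, with $x_p^*\in\{c/Q,(c+1)/Q\}$, establishes $|x_p^*-x_q^*|\in\{0,1/Q\}$.

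The main obstacle I anticipate is the second paragraph: getting the bound down from the naive $2/Q$ to the optimal $1/Q$, which genuinely requires exploiting integrality of the rescaled values and a careful treatment of the floor in the recurrence $\delta(y(\ell)) \le \lfloor \tfrac12\delta(y(\ell-1))\rfloor + 1$ — in particular checking that the real inequality $\delta(y(\ell)) \le \tfrac12\delta(y(\ell-1)) + 1$ upgrades to this integer version, and then counting iterations of $t\mapsto\lfloor t/2\rfloor+1$ precisely enough to land on the stated logarithm. The finite-time-convergence argument is comparatively routine once the minimum and maximum are known to freeze.
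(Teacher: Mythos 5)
Your opening step---the per-macro-round recurrence $\delta\big(x(\ell)\big)\le\tfrac12\,\delta\big(x(\ell-1)\big)+\tfrac1Q$, obtained from nonsplitness of $\hat{G}(\ell)$ and $(1/2)$-safeness of the mid-point rule---is exactly how the paper's proof begins, and is fine. The gap is in the step you yourself flag as the main obstacle: getting below $2/Q$. Rescaling to integers does give $\delta\big(y(\ell)\big)\le\big\lfloor\tfrac12\delta\big(y(\ell-1)\big)\big\rfloor+1$, but the map $t\mapsto\lfloor t/2\rfloor+1$ has $2$ as a fixed point (indeed $\lfloor 2/2\rfloor+1=2$), not $1$, so iterating it never takes you below $\delta(y)=2$. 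Your claim that it ``drives any integer down to the fixed point $1$'' is false, and as written your argument proves only $2/Q$-agreement. The paper closes this gap combinatorially rather than arithmetically: after $L=\lfloor\log_2(Q-2)\rfloor+1$ macro-rounds the values occupy at most three consecutive multiples of $1/Q$, and in the remaining bad case $M(L)=m(L)+2/Q$ one lets $\mathcal{M}$ be the set of processes holding $M(L)$ and observes that if some process still holds $m(L)+2/Q$ after macro-round $L+1$, then all of its in-neighbors in $\hat{G}(L+1)$ lie in $\mathcal{M}$; nonsplitness then forces every process to hear a member of $\mathcal{M}$ and hence to move up to at least $m(L)+1/Q$, contradicting the survival of the minimum. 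Some such use of the graph structure is needed---or, alternatively, of the \emph{strictness} of the one-sided rounding error $0\le a-[a]<1/Q$, which sharpens the integer recurrence to $\delta\big(y(\ell)\big)\le\big\lceil\delta\big(y(\ell-1)\big)/2\big\rceil$, whose fixed point is $1$. The non-strict bound $\lvert[a]-a\rvert\le 1/Q$ that you propose to use cannot, by itself, get below $2/Q$.

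A secondary problem lies in the finite-time-convergence part: you assert that once the minimum is frozen, the process achieving it ``is received (via nonsplitness/rootedness) by every process in each subsequent macro-round.'' Nonsplitness only guarantees a common in-neighbor for each \emph{pair} of processes, not that any particular process is heard by all, so this pinning argument does not go through. The correct observation (implicit in the paper's case of $M(L)=m(L)+1/Q$) is that once all values lie in $\{c/Q,(c+1)/Q\}$ and rounding is downward, a process adopts $(c+1)/Q$ only if it hears exclusively $(c+1)/Q$; because of the self-loops, a process holding $c/Q$ always hears itself, so the set of processes at $c/Q$ is non-decreasing and therefore stabilizes, after which every $x_p$ is constant. (As the paper notes, there is no bound on this stabilization time, which is why the theorem separates the $1/Q$-agreement deadline from the eventual convergence claim.)
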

 
\begin{proof}
 We adopt the same notation as in the previous sections; 
 	in particular, we set $m (\ell) = \min_{p \in [n]} \big ( x_p(\ell) \big) $
	and $M (\ell ) = \max_{p \in [n]} \big ( x_p(\ell) \big)$.
Moreover we give the proof in the case of the rounding down rule.
	
Since  the mid-point algorithm  is $(1/2)$-safe in any nonsplit network model,	
	for every macro-round $\ell$ we have
	$$ \delta \left( x(\ell) \right) \leq 
		\frac{ \delta \left( x (\ell-1) \right)}{2}   + \frac{1}{Q}   \, .$$
Hence we obtain 
	\begin{equation}\label{eq:quantized}
	 \delta \left( x (\ell) \right) \leq  \frac{ \delta \left( x (0) \right) }{2^{\ell}}  + \frac{1}{Q} \,
	                                                 \left(1+ \frac{1}{2}+ \dots + \frac{1}{2^{\ell -1}} \right ) 
	                                                 \leq  \frac{ 1 }{2^{\ell}}  + \frac{2}{Q} \, \left( 1-  \frac{ 1 }{2^{\ell}}   \right) \, .
	 \end{equation}
It follows that $ \delta \left( x (\ell) \right) < 	\frac{3}{Q} $ for every macro-round $\ell$ with $\ell >  \log_2(Q-2)$.

Let $L= \big\lfloor \log_2 (Q-2) \big\rfloor +1$; we have  $ \delta \left( x (L) \right) < \frac{3}{Q} $, 
	and so at the end of macro-round $L$, processes adopt at most three different values which are multiples of $1/Q$
	in an interval of the form $ [ (j-1)/Q , (j+1)/Q ]$.
We now consider the following three cases:
\begin{enumerate}
\item $M (L) = m (L)$;  then all the $x_p$'s remains equal to $M(L)$ from macro-round~$L$.
The theorem trivially holds in this case.
\item $M (L) = m (L) + 1/Q$.
	By an inductive argument, we see that for each process~$p$,
	 the sequence $\big( x_p(\ell) \big)_{\ell \geq 1}$ is eventually constant\footnote{%
	Note that there is no bound on the convergence time.}
	 and its limit is either $m(L)$ or $m(L)+1/Q$.
This shows that the theorem holds in this case.
\item Otherwise $M (L) = m (L) + 2/Q$.
We show that $M(L+1)-m(L+1) \leq 1/Q$.

Let ${\cal M}$ denote the set of  processes with values  $M(L)$ at the end of macro-round~$L$.
Then for every process~$p$, we have
	$$ x_p(L+1) = \left\{ \begin{array}{ll}
	                                  m(L) \mbox{ or } m(L) + 1/Q  & \mbox{ if } p \notin {\cal M} \\
	                                  m(L) + 1/Q  \mbox{ or } m(L) + 2/Q  & \mbox{ if } p \in {\cal M} 
	                                  \end{array} \right. $$
Suppose for contradiction that $M(L+1) = m(L+1) + 2/Q$.
Then we have $ m(L+1) = m(L) $, $ M(L+1) = M(L) = m(L) + 2/Q$, and  there exists some process $q_0  \in {\cal M}$ 
	such that $$x_{q_0}(L+1) = x_{q_0}(L) = m(L) + 2/Q \, . $$
In the communication graph $\hat{G}(L+1)$ of macro-round $L+1$,
	all of~$q_0$'s incoming neighbors are in~${\cal M}$.
Combined with the fact that $\hat{G}(L+1)$  is nonsplit, this implies that every process~$p$ 
	has an incoming neighbor in ${\cal M}$.
Hence for every process~$p\notin {\cal M}$, we have $$x_p(L+1)=  m(L) + 1/Q \, , $$
	leading to a contraction with the equality $ m(L+1) = m(L) $.
	
It follows that either $M(L+1) = m(L+1)$ or $M(L+1) = m(L+1) + 1/Q$.
In other words, either case (1) or case (2) occurs at the end of macro-round $L+1$.
\qedhere
\end{enumerate}
\end{proof}

Inequality (\ref{eq:quantized}) can be easily generalized for the amortized version of any 
	$\varrho$-safe algorithm.
However the rest of the above proof works only for value $1/2$ of parameter~$\varrho$.
Hence Theorem~\ref{thm:roundingmid} seems to be specific to the amortized mid-point algorithm.

 \subsection{Approximate consensus versus 2-set consensus}
 
 We now consider the {\em 2-set consensus problem} which is another natural generalization of 
 	the consensus problem.
Instead of requiring that processes agree to within any positive real-valued tolerance $\varepsilon$,
	processes have to decide on at most 2 different values:
\begin{description}
\item{\em Agreement.}	There are at most two  different decision values.
\end{description}

Formally, each process starts with an input value from the set~$V$ of multiples of $1/Q$ 
	and has to output a decision value from $V$ in such a way that the termination and validity conditions in
	the consensus specification as well as  the above agreement condition are satisfied.
	
The 2-set consensus problem naturally reduces to approximate consensus: processes round off  their 
	$1/Q$-agreement output values.
Unfortunately, the use of averaging procedures to solve approximate consensus leads processes to exchange values out of the set~$V$ 
	in the resulting 2-set consensus algorithms.
The quantized amortized mid-point algorithm allows us to overcome this problem, and Theorem~\ref{thm:roundingmid}
	shows that in any rooted network model,  this algorithm achieves 2-set consensus in~$ (n-1)\left( \big\lfloor \log_2 (Q-2) \big\rfloor +2 \right)$
	rounds.

The above discussion shows that the 2-set consensus problem is  solvable in a dynamic network model if 
	all the communication graphs are rooted.
In particular, 2-set consensus is solvable in a asynchronous complete network with a minority of faulty senders since
	nonsplit rounds  can be implemented in this model.
Combined with the impossibility result in~\cite{dPMR99}  	in the case of a strict majority of faulty processes,
	we obtain an exact characterization of  the sender faulty models for which 2-set consensus is solvable 
	in asynchronous systems if the number of processes $n$ is odd 
	and a small gap (namely $n/2$ faulty processes) when $n$ is even.
	
Our positive result can be interestingly compared with  the 2 faulty processes boundary~\cite{BG93,HS93,SZ93} between
	possibility and impossibility of the original (and stronger)  2-set consensus problem~\cite{Cha93} where decision values
	ought to be initial values, instead of being in the range of the initial values.
That points out the crucial role of the validity condition on the solvability of 2-set consensus.

\bibliographystyle{plain}
\bibliography{agents}


%

\end{document}